\theoremstyle{plain}
\newtheorem{theorem}{Theorem}[section]
\theoremstyle{definition}
\newtheorem{definition}[theorem]{Definition}
\theoremstyle{remark}
\DeclareMathOperator{\Lap}{Lap}
\DeclareMathOperator{\children}{children}
\DeclareMathOperator{\Ext}{Ext}
\DeclareMathOperator{\PrivTree}{PrivTree}
\DeclareMathOperator{\Counting}{Counting}
\newcommand{\inparanth}[1]{\left( #1 \right)}
\newcommand{\prob}[1]{\mathbb{P} \left\{ #1 \right\}}
\newcommand{\bigo}[1]{\mathcal{O}\left( #1 \right)}
\def \N {\mathbb{N}}
\def \P {\mathbb{P}}
\def \R {\mathbb{R}}
\def \AA {\mathcal{A}}
\def \CC {\mathcal{C}}
\def \LL {\mathcal{L}}
\def \MM {\mathcal{M}}
\def \PP {\mathcal{P}}
\def \a {\alpha}
\def \b {\beta}
\def \e {\varepsilon}
\def \d {\delta}
\def \one {{\textbf 1}}
\def \ind {{\mathds 1}}
\newcommand\restr[2]{{\left.\kern-\nulldelimiterspace{#1}\right|_{#2}}}
\icmltitlerunning{An Algorithm for Streaming Differentially Private Data}
\begin{document}

\twocolumn[
\icmltitle{An Algorithm for Streaming Differentially Private Data}



\icmlsetsymbol{equal}{*}

\begin{icmlauthorlist}
\icmlauthor{Girish Kumar}{ucd}
\icmlauthor{Thomas Strohmer}{ucd}
\icmlauthor{Roman Vershynin}{uci}
\end{icmlauthorlist}

\icmlaffiliation{ucd}{Department of Mathematics, University of California, Davis, USA}
\icmlaffiliation{uci}{Department of Mathematics, University of California, Irvine, USA}

\icmlcorrespondingauthor{Girish Kumar}{gkum@ucdavis.edu}

\icmlkeywords{privacy, stream}

\vskip 0.3in
]



\printAffiliationsAndNotice{}  

\begin{abstract}
    Much of the research in differential privacy has focused on offline applications with the assumption that all data is available at once. When these algorithms are applied in practice to streams where data is collected over time,  this either violates the privacy guarantees or results in poor utility. We derive an algorithm for differentially private synthetic streaming data generation, especially curated towards spatial datasets. Furthermore, we provide a general framework for online selective counting among a collection of queries which forms a basis for many tasks such as query answering and synthetic data generation. The utility of our algorithm is verified on both real-world and simulated datasets. 
\end{abstract}

\section{Introduction}


Many data driven applications require frequent access to the user's location to offer their services more efficiently. These are often called Location Based Services (LBS). Examples of LBS include queries for nearby businesses \cite{lbs_advertisement}, calling for taxi pickup \cite{lbs_local_taxi, lbs_taxi_clustering}, and local weather information \cite{lbs_weather_app_wiki}.

Much of location data contains sensitive information about a user in itself or when cross-referenced with additional information. Several studies \cite{attack_survey} have shown that publishing location data is susceptible to revealing sensitive details about the user and simply de-identification \cite{attack_ny_taxi} or aggregation \cite{attack_aggregdated_data_ash} is not sufficient. Thus, privacy concerns limit the usage and distribution of sensitive user data. One potential solution to this problem is privacy-preserving synthetic data generation. 

Differential privacy (DP) can be used to quantifiably guarantee the privacy of an individual and it has been used by many notable institutions such as US Census \cite{Abowd2019CensusTD}, Google \cite{rapporGoogle}, and Apple \cite{appleDP}. DP has seen a lot of research across multiple applications such as statistical query answering \cite{mckenna2021hdmm}, regression \cite{dpLinearRegression}, clustering \cite{dp_dbscan}, and large-scale deep learning \cite{abadi2016deep}. Privacy-preserving synthetic data generation has also been explored for various data domains such as tabular microdata \cite{hardt2012mwem, zhang2017privbayes, pate_gan, tao2021benchmarking, mckenna2021winning}, natural language \cite{li_xuechen_llm_dp, yue2022synthetic}, and images \cite{dp_cgan, xu2019ganobfuscator}.

DP has also been explored extensively for various use cases concerning spatial datasets such as answering range queries \cite{privtree}, collecting user location data \cite{location_dp_survey}, and collecting user trajectories \cite{li2017trajectoryDP}. In particular for synthetic spatial microdata generation, a popular approach is to learn the density of true data as a histogram over a Private Spatial Decomposition (PSD) of the data domain and sample from this histogram \cite{privtree, psd_kim2018differentially, psd_maryam2018, psd_qardaji2013ug}. Our work also uses this approach and builds upon the method PrivTree \cite{privtree} which is a very effective algorithm for generating PSD.

Despite a vast body of research, the majority of developments in differential privacy have been restricted to a one-time collection or release of information. In many practical applications, the techniques are required to be applied either on an event basis or a regular interval. Many industry applications of DP simply re-run the algorithm on all data collected so far, thus making the naive (and usually, incorrect) assumption that future data contributions by users are completely independent of the past. This either violates the privacy guarantee completely or results in a very superficial guarantee of privacy \cite{tang2017ApplePrivacy}.

Differential privacy can also be used for streaming data allowing the privacy-preserving release of information in an online manner with a guarantee that spans over the entire time horizon \cite{Dwork2010ContinualDP}. However, most existing algorithms using this concept such as \cite{Dwork2010ContinualDP, Chan2010ContinualPrivateStats, Joseph2018LocalDP, Ding2017CollectingTD} are limited to the release of a statistic after observing a stream of one-dimensional input (typically a bit stream) and have not been explored for tasks such as synthetic data generation.

In this work, we introduce the novel task of privacy-preserving synthetic multi-dimensional stream generation. We are particularly interested in low-sensitivity granular spatial datasets collected over time. For motivation, consider the publication of the coordinates (latitude and longitude) of residential locations of people with active coronavirus infection. As people get infected or recover, the dataset evolves and the density of infection across our domain can dynamically change over time. Such a dataset can be extremely helpful in making public policy and health decisions by tracking the spread of a pandemic over both time and space. This dataset has low sensitivity in the sense that it is rare for a person to get infected more than a few times in, say, a year.
We present a method that can be used to generate synthetic data streams with differential privacy and we demonstrate its utility on real-world datasets. Our contributions are summarized below.
\vspace*{-2mm}
\begin{enumerate}
\setlength{\itemsep}{-0.3ex}
    \item To the best of our knowledge, we present the first differentially private streaming algorithm for the release of multi-dimensional synthetic data;
    \item we present a meta-framework that can be applied to a large number of counting algorithms to resume differentially private counting on regular intervals;
    \item we further demonstrate the utility of this algorithm for synthetic data generation on both simulated and real-world spatial datasets;
    \item furthermore, our algorithm can handle both the addition and the deletion of data points (sometimes referred to as turnstile model), and thus dynamic changes of a dataset over time.
\end{enumerate}

\subsection{Related Work}
A large body of work has explored privacy-preserving techniques for spatial datasets. The methods can be broadly classified into two categories based on whether or not the curator is a trusted entity. If the curator is not-trusted, more strict privacy guarantees such as Local Differential Privacy and Geo-Indistinguishability are used and action is taken at the user level before the data reaches the server. We focus on the case where data is stored in a trusted server and the curator has access to the true data of all users. This setup is more suited for publishing privacy-preserving microdata and aggregate statistics. Most methods in this domain rely on Private Spatial Decomposition (PSD) to create a histogram-type density estimation of the spatial data domain. PrivTree \cite{privtree} is perhaps the most popular algorithm of this type and we refer the reader to \cite{location_dp_survey} for a survey of some other related methods. However, these offline algorithms assume we have access to the entire dataset at once so they do not apply directly to our use case.

We use a notion of differential privacy that accepts stream as input and guarantees privacy over the entire time horizon, sometimes referred to as continual DP. Early work such as \cite{Dwork2010ContinualDP} and \cite{Chan2010ContinualPrivateStats} explores the release of bit count over an infinite stream of bits and introduce various effective algorithms, including the Binary Tree mechanism. We refer to these mechanisms here as \textit{Counters} and use them as a subroutine of our algorithm. In \cite{ContinualRectQueries}, the authors build upon the work in \cite{Dwork2010ContinualDP} and use the Binary Tree mechanism together with an online partitioning algorithm to answer range queries. However, the problem considered is answering queries on offline datasets. In \cite{cdp_wang2021continuousStream, cdp_chen2017pegasus} authors further build upon the task of privately releasing a stream of bits or integers under user and event-level privacy. A recent work \cite{cdp_jain2023countingDistinct} addresses deletion when observing a stream under differential privacy. They approach the problem of releasing a count of distinct elements in a stream. Since these works approach the task of counting a stream of one-dimensional data, they are different from our use case of multi-dimensional density estimation and synthetic data generation. The Binary Tree Mechanism has also been used in other problems such as online convex learning \cite{continual_private_bandit} and deep learning \cite{private_dl_without_sampling}, under the name \textit{tree-based aggregation trick}. Many works with streaming input have also explored the use of local differential privacy for the collection and release of time series data such as telemetry \cite{Joseph2018LocalDP, Ding2017CollectingTD}. 

To the best of our knowledge a very recent work \cite{cdp_bun2023continualSynthData} is the only other to approach the task of releasing a synthetic stream with differential privacy. However, they approach a very different problem where the universe consists of a fixed set of users, each contributing to the dataset at all times. Moreover, a user's contribution is limited to one bit at a time, and the generated synthetic data is derived to answer a fixed set of queries. In contrast, we allow multi-dimensional continuous value input from an arbitrary number of users and demonstrate the utility over randomly generated range queries.

\section{The problem}

In this paper, we present an algorithm that transforms a data stream into a differentially private data stream. The algorithm can handle quite general data streams: at each time $t \in \N$, the data is a subset of some abstract set $\Omega$. For instance, if $\Omega$ can be the location of all U.S. hospitals, and the data at time $t=3$ can be the locations of all patients spending time in hospitals on day $3$. Such data can be conveniently represented by a {\em data stream}, which is any function of the form $f(x,t): \Omega \times \N \to \R$. We can interpret $f(x,t)$ as the number of data points present at location $x \in \Omega$ at time $t \in \N$. For instance, $f(x,3)$ can be the number of COVID positive patients at location $x$ on day $3$. 

We present {\em an $\e$-differentially private, streaming algorithm that takes as an input a data stream and returns as an output a data stream.} This algorithm tries to make the output stream as close as possible to the input data stream, while upholding differential privacy. 

\subsection{A DP streaming, synthetic data algorithm}	\label{s: dp streaming synthetic}

The classical definition of differential privacy of a randomized differential algorithm $\AA$ 
demands that for any pair of input data $f$, $\tilde{f}$ that differ by a single data point, 
the outputs $\AA(f)$ and $\AA(\tilde{f})$ be statistically indistinguishable. 
Here is a version of this definition, which is very naturally adaptable to problems about data streams.

\begin{definition}[Differential privacy]		\label{def: DP}
  A randomized algorithm $\AA$ that takes as an input a point $f$ in some given normed space is $\e$-differentially private if for any two data streams that satisfy $\norm[0]{f-\tilde{f}}=1$, the inequality
  \begin{equation}	\label{eq: DP}
  \P\{\AA(\tilde{f}) \in S\} \le e^\e \cdot \P\{\AA(f) \in S\}
  \end{equation}
  holds for any measurable set of outputs $S$.
\end{definition}

To keep track of the change of data stream $f$ over time, 
it is natural to consider the {\em differential stream} 
\begin{equation}	\label{eq: differential stream}
	\nabla f(x,t) = f(x,t)-f(x,t-1), \quad t \in \N,
\end{equation}
where we set $f(x,0)=0$. The total change of $f$ over all times and locations is the quantity
$$
\norm{f}_\nabla \coloneqq \sum_{x \in \Omega} \sum_{t \in \N} \abs{\nabla f(x,t)},
$$ 
which defines a seminorm on the space of data streams.
A moment's thought reveals that two data streams satisfy 
\begin{equation}	\label{eq: sensitivity}
\norm[0]{f-\tilde{f}}_\nabla = 1
\end{equation}
if and only if $\tilde{f}$ can be obtained from $f$ by changing {\em a single data point}: either one data point is added at some time and is never removed later, or one data point is removed and is never added back later. This makes it natural to consider DP of streaming algorithms wrt.\ the $\norm{\cdot}_\nabla$ norm.

The algorithm we are about to describe produces {\em synthetic data}: it converts an input stream $f(x,t)$ into an output stream $g(x,t)$. The algorithm is {\em streaming}: at each time $t_0$, it can only see the part of the input stream $f(x,t)$ for all $x \in \Omega$ and $t \le t_0$, and at that time the algorithm outputs $g(x,t_0)$ for all $x \in \Omega$.

\subsection{Privacy of multiple data points is protected}
Now that we quantified the effect of the change of a single input data point, we can change any number of input data points. If two data streams satisfy $\norm[0]{f-\tilde{f}}_\nabla = k$, then applying Definition~\ref{def: DP} $k$ times and using the triangle inequality, we conclude that 
$$
\P\{\AA(\tilde{f}) \in S\} \le e^{k\e} \cdot \P\{\AA(f) \in S\}.
$$
For example, suppose that a patient who gets sick and spends a week at some hospital; then she recovers, but after some time she gets sick again and spends another week at another hospital and finally recovers completely. If the data stream $\tilde{f}$ is obtained from $f$ by removing such a patient, then, due to the four events described above,  $\norm[0]{f-\tilde{f}}_\nabla = 4$.
Hence, we conclude that
$
\P\{\AA(\tilde{f}) \in S\} \le e^{4\e} \cdot \P\{\AA(f) \in S\}.
$
In other words, the privacy of patients who contribute four events to the data stream is automatically protected as well, although the protection guarantee is four times weaker than for patients who contribute a single event. 

\section{The method}

Here, we describe our method in broad brushstrokes. In Appendix~\ref{app:optim} we discuss how we optimize the computational and storage cost of our algorithm.

\subsection{From streaming on sets to streaming on trees}

First, we convert the problem of differentially private streaming of a function $f(x,t)$ on a set $\Omega$ to a problem of differentially private streaming of a function $F(x,t)$ on a {\em tree}. To this end, fix some {\em hierarchical partition} of the domain $\Omega$. Thus, assume that  $\Omega$ is partitioned into some $\beta>1$ subsets $\Omega_1,\ldots,\Omega_\beta$, and each of these subsets $\Omega_i$ is partitioned into $\beta$ further subsets, and so on. A hierarchical partition can be equivalently represented by a {\em tree} $T$ whose vertices are subsets of $\Omega$ and the children of each vertex form a partition of that vertex. Thus, the tree $T$ has root $\Omega$; the root is connected to the $\beta$ vertices $\Omega_1,\ldots,\Omega_\beta$, and so on. We refer to $\beta$ as the {\em fanout} number of the tree.

In practice, there often exists a natural hierarchical decomposition of $\Omega$. For example, if $\Omega=\{0,1\}^d$ a binary partition obtained by fixing a coordinate is natural such as $\Omega_1=\{0\} \times \{0,1\}^{d-1}$ and $\Omega_2=\{1\} \times \{0,1\}^{d-1}$. Each of $\Omega_1$ and $\Omega_2$ can be further partitioned by fixing another coordinate. As another example, if $\Omega=[0,1]^d$ and fanout $\beta=2$, a similar natural partition can be obtained by splitting a particular dimension's region into halves such that $\Omega_1=[0,\frac{1}{2})\times[0,1]^{d-1}$ and $\Omega_2=[\frac{1}{2},1]\times[0,1]^{d-1}$. Each of $\Omega_1$ and $\Omega_2$ can be further partitioned by splitting another coordinate's region into halves.

We can convert any function on the set $\Omega$ into a function on the vertices of the tree $T$ by summing the values in each vertex. I.e., to convert $f \in \R^\Omega$ into $F \in \R^{V(T)}$, we set
\begin{equation}	\label{eq: contraction}
F(v) \coloneqq \sum_{x \in v} f(x), 
\quad v \in V(T).
\end{equation}
Vice versa, we can convert any function $G$ on the vertices of the tree $T$ into a function $g$ on the set $\Omega$ by assigning value $G(v)$ to one arbitrarily chosen point in each leaf $v$. In practice, however, the following variant of this rule works better if $G(v)>0$. 
Assign value $1$ to $\lceil G(v) \rceil$ random points in $v$, i.e. set
\begin{equation}	\label{eq: contraction reversed}
g \coloneqq \sum_{v \in \LL(T)} \sum_{i=1}^{\lceil G(v) \rceil} \one_{x_i(v)}
\end{equation}
where $x_i(v)$ are independent random points in $v$ and $\one_x$ denotes the indicator function of the set $\{x\}$. The points $x_i(v)$ can be sampled from any probability measure on $v$, and in practice we often choose the uniform measure on $v$.

Summarizing, we reduced our original problem to constructing an algorithm that transforms any given stream 
$F(x,t): V(T) \times \N \to \{0,1,2,\ldots\}$
into a differentially private stream $G(x,t): V(T) \times \N \to \{0,1,2,\ldots\}$
where $V(T)$ is the vertex set of a fixed, known tree $T$.

\subsection{Consistent extension}

Let $C(T)$ denote the set of all functions $F \in \R^{V(T)}$ that can be obtained from functions $f \in \R^\Omega$ using transformation \eqref{eq: contraction}. The transformation is linear, so $C(T)$ must be a linear subspace of $\R^{V(T)}$. A moment's thought reveals that $C(T)$ is comprised of all {\em consistent functions} -- the functions $F$ that satisfy the equations\
\begin{equation}	\label{eq: consistency}
F(v) = \sum_{u \in \children(v)} F(u)
\quad \text{for all } v \in V(T).
\end{equation}
Any function on $V(T)$ can be transformed into a consistent function by pushing the values up the tree and spreading them uniformly down the tree. More specifically, this can be achieved by the linear transformation 
$$
\Ext_T : \R^{V(T)} \to C(T),
$$
that we call the {\em consistent extension}. Suppose that a function $F$ takes value $1$ on some vertex $v \in V(T)$ and value $0$ on all other vertices. To define $G=\Ext_T(F)$, we let $G(u)$ equal $1$ for any ancestor of $v$ including $v$ itself, $1/\beta$ for any child of $v$, $1/\beta^2$ for any grandchild of $v$, and so on. In other words, we set $G(u)=\beta^{-\max(0,d(v,u))}$ where $d(v,u)$ denotes the directed distance on the tree $T$, which equals the usual graph distance (the number of edges in the path from $v$ to $u$) if $u$ is a descendant of $v$, and minus the graph distance otherwise. Extending this rule by linearity, we arrive at the explicit definition of the consistent extension operator:
$$
\Ext_T(F)(u) \coloneqq \sum_{v \in V(T)} F(v) \, \beta^{-\max(0,d(v,u))}, 
\quad u \in V(T).
$$

%
%

By definition~\eqref{eq: consistency}, a consistent function is uniquely determined by its values on the leaves of the tree $T$. Thus a natural norm of $C(T)$ is 
$$
\norm{F}_{C(T)} \coloneqq \sum_{v \in \LL(T)} \abs{F(v)}.
$$

\subsection{Differentially Private Tree}		\label{s: PrivTree subsection}

A key subroutine of our method is a version of the remarkable algorithm $\PrivTree$ due to \cite{privtree}. In the absence of noise addition, one can think of $\PrivTree$ as a deterministic algorithm that inputs a tree $T$ and a function $F \in \R^{V(T)}$ and outputs a subtree of $T$. The algorithm grows the subtree iteratively: for every vertex $v$, if $F(v)$ is larger than a certain threshold $\theta$, the children of vertex $v$ are added to the subtree. 

\begin{theorem}[Privacy of PrivTree]\label{thm: PrivTree privacy}
  The randomized algorithm $\MM \coloneqq \PrivTree_T(F, \e, \theta)$ is $\e$-differentially private in the $\norm{\cdot}_{C(T)}$ norm for any $\theta \geq 0$.
\end{theorem}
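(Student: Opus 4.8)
The plan is to follow the argument of \cite{privtree}, adapted to the seminorm $\norm{\cdot}_{C(T)}$ and to Definition~\ref{def: DP}. Since the output of $\PrivTree$ is a finite object --- an admissible subtree $\mathcal{T}$ of $T$ --- it suffices to show that for every such $\mathcal{T}$ and every pair $F,\tilde F\in C(T)$ with $\norm{F-\tilde F}_{C(T)}=1$ one has $e^{-\e}\le \P\{\MM(F)=\mathcal{T}\}/\P\{\MM(\tilde F)=\mathcal{T}\}\le e^{\e}$; summing over $\mathcal{T}\in S$ then gives \eqref{eq: DP}. The first step is to unpack $\norm{F-\tilde F}_{C(T)}=1$: a consistent function is determined by its leaf values, so $\tilde F-F$ equals $\pm\one$ on a single leaf, and hence by consistency \eqref{eq: consistency} it equals $\pm1$ on every vertex of the root-to-leaf path $P=(p_0,\dots,p_L)$ ending at that leaf and $0$ on all other vertices. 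Thus $\tilde F$ is obtained from $F$ by inserting or deleting one data point, and the counts $F(v)$ differ between $F$ and $\tilde F$ only along $P$.

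Next I would record the mechanics of $\PrivTree$ with its standard calibration. When a vertex $v$ is visited the algorithm forms the \emph{biased count} $b_v(F)=F(v)-\delta\cdot\mathrm{depth}(v)$, clamps it from below to $\hat b_v(F)=\max\{b_v(F),\theta-\delta\}$, adds fresh Laplace noise $\tilde b_v=\hat b_v(F)+\Lap(\lambda)$, and splits $v$ (i.e.\ visits all of its children) iff $\tilde b_v>\theta$; here $\lambda=\tfrac{2\beta-1}{(\beta-1)\e}$ and $\delta=\lambda\ln\beta$. Because the vertices visited during any run that outputs $\mathcal{T}$ are exactly the vertices of $\mathcal{T}$, with the internal vertices of $\mathcal{T}$ being the split ones, the output probability $\P\{\MM(F)=\mathcal{T}\}$ factorizes into a product over the internal vertices $v$ of $\mathcal{T}$ of $\P\{\Lap(\lambda)>\theta-\hat b_v(F)\}$ times a product over the non-maximal-depth leaves $v$ of $\mathcal{T}$ of $\P\{\Lap(\lambda)\le\theta-\hat b_v(F)\}$. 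Passing from $F$ to $\tilde F$ changes only the factors for $v\in P$, so everything reduces to bounding a product of at most $|P|$ Laplace likelihood ratios.

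The crux is that this product stays below $e^{\e}$ even though $P$ can be long. I would use two structural facts. First, inserting a point changes each $\hat b_v$, $v\in P$, by at most $1$, and changes it at all only when $b_v(F)\ge\theta-\delta-1$; in particular a clamped vertex has $\hat b_v=\theta-\delta$ under both $F$ and $\tilde F$, so contributes likelihood ratio $1$. Second --- and this is where nonnegativity and consistency of $F$ enter --- along $P$ the counts are non-increasing, $F(p_{i+1})\le F(p_i)$, so the biased counts drop by at least $\delta$ per level. Writing $a_i:=\theta-\hat b_{p_i}(F)$, the clamp forces $a_i\le\delta$ for all $i$ while the decay forces $a_{i+1}\ge a_i+\delta$; hence the finitely many ``relevant'' vertices of $P$ (those with $\Delta_i\neq0$) are consecutive, and listed from the deepest upward their $a$-values decrease in steps of at least $\delta$ from a value $\le\delta$. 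Combining the Laplace tail bound $\P\{\Lap(\lambda)>a-s\}/\P\{\Lap(\lambda)>a\}\le e^{s/\lambda}$ for $s\in[0,1]$ with the sharper estimate $\lesssim e^{a/\lambda}$ valid when $a<0$, the per-vertex log-likelihood-ratios form (up to the first couple of near-threshold terms) a geometric series of ratio $e^{-\delta/\lambda}=1/\beta$, and the total privacy loss in the insertion direction telescopes to $\tfrac{1}{\lambda}\cdot\tfrac{2\beta-1}{\beta-1}=\e$ by the choice of $\lambda$. The deletion direction is easier: deleting a point only lowers each $\hat b_v$, shrinking every split probability, so all internal-vertex factors on $P$ have ratio $\le1$ and only the unique leaf-of-$\mathcal{T}$ vertex of $P$ contributes, with ratio at most $e^{1/\lambda}\le e^{\e}$. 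Finally I would note that $\theta$ never appears in any of these bounds, which is exactly why privacy holds for every $\theta\ge0$.

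The step I expect to be the real obstacle is the geometric/telescoping estimate in the third paragraph: one must treat the clamping boundary carefully, since there are up to $O(1/\delta)$ near-threshold vertices on $P$ at which the likelihood ratio has not yet entered its geometric tail, and one must verify that the two Laplace-ratio bounds stitch together into a single clean sum. Calibrating $\lambda$ and $\delta$ so that this sum equals exactly $\e$, rather than merely a constant multiple of it, is the delicate bookkeeping, and it is where the precise value $\tfrac{2\beta-1}{\beta-1}$ of the fanout-dependent constant has to be extracted.
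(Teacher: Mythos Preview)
Your proposal is correct and follows essentially the same route as the paper's proof: reduce to a single root-to-leaf path $P$ via the $C(T)$ norm, factorize the output probability over vertices of $\mathcal{T}$, use the $\delta$-per-level drop of the biased counts to bound the relevant log-likelihood ratios by a geometric series with ratio $e^{-\delta/\lambda}=1/\beta$, and treat the ``deletion'' direction separately as the easy case. One small inaccuracy worth flagging: your anticipated obstacle of ``up to $O(1/\delta)$ near-threshold vertices'' does not materialize --- because the unclamped biased counts drop by at least $\delta$ per level, there is a \emph{single} transition index $m$ (with $b(v_m)\ge\theta-\delta+1$ and $b(v_{m+1})$ clamped), the term at $v_m$ contributes the leading $1/\lambda$, and all shallower vertices on $P$ already sit in the geometric tail; the bookkeeping is therefore cleaner than you expect.
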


In Appendix~\ref{app: PrivTree proof}, we give the $\PrivTree$ algorithm and proof of its delicate privacy guarantees in detail.

\subsection{Differentially Private Stream}

We present our method PHDStream (Private Hierarchical Decomposition of Stream) in Algorithm~\ref{alg: PHDStream}. Algorithm~\ref{alg: PHDStreamTree} transforms an input stream $F(\cdot,t) \in V(T) \times \N \to \R$ into a stream $G(\cdot,t) \in V(T) \times \N \to \R$. In the algorithm, $\LL(T)$ denotes the set of leaves of tree $T$, and $\Lap(\cdot, t, 2/\e)$ denote independent Laplacian random variables. Using Algorithm~\ref{alg: PHDStreamTree} as a subroutine, Algorithm~\ref{alg: PHDStream} transforms a stream $f(x,t) \in \Omega \times \N \to \R$ into a stream 
$g(x,t) \in \Omega \times \N \to \R$:

\begin{algorithm}[tb]
    \begin{algorithmic}[1]
        \STATE {\bfseries Input:} $F \in C(T)$, the privacy budget parameter $\e$, the threshold count for a node $\theta$.
        \STATE {\bfseries Output:} A stream $G \in C(T)$.
        \STATE Initialize $G(v,0)=0$ for all $v \in V(T)$.
        \FOR{every time $t \in \N$}
            \STATE $T(t) \leftarrow \PrivTree_T \left( G(\cdot,t-1) + \nabla F(\cdot,t), \e/2, \theta \right)$.
            \STATE $d(v,t) \leftarrow 
        	\begin{cases}
        		\nabla F(v,t) + \Lap(\cdot, t, 2/\e) & \text{if } v \in \LL(T(t)) \\
        		0 & \text{otherwise}.
        	\end{cases}$ 
            \STATE $G(v,t) \leftarrow G(v,t-1) + \Ext_T(d(\cdot,t))	\quad \forall v \in V(T)$.
        \ENDFOR
    \end{algorithmic}
    \caption{PHDStreamTree}
    \label{alg: PHDStreamTree}
\end{algorithm}

\begin{algorithm}[ht]
    \begin{algorithmic}[1]
        \STATE {\bfseries Input:} Input data stream $f:\Omega \times \N \to \R$, the privacy budget parameter $\e$, the threshold count for a node $\theta$.
        \STATE {\bfseries Output:} Data stream $g$.
        \STATE Apply PHDStreamTree (Algorithm~\ref{alg: PHDStreamTree}) for $F$ obtained from $f$ using \eqref{eq: contraction}, and we convert the output $G$ into $g$ using \eqref{eq: contraction reversed}.
    \end{algorithmic}
    \caption{PHDStream}
    \label{alg: PHDStream}
\end{algorithm}

\subsection{Privacy of PHDStream}

In the first step of this algorithm, we consider the previously released synthetic stream $G(\cdot,t-1)$, update it with the newly received real data $\nabla F(\cdot,t)$, and feed it into $\PrivTree_T$, which produces a differentially private subtree $T(t)$ of $T$.

In the next two steps, we compute the updated stream on the tree. It is tempting to choose for $G(\cdot,t)$ a stream computed by a simple random perturbation as, $G(\cdot,t) = G(\cdot,t-1) + \nabla F(\cdot,t) + \Lap(\cdot, t, 2/\e)$. The problem, however, is that such randomly perturbed stream would not be differentially private. Indeed, imagine we make a stream $\tilde{f}$ by changing the value of the input stream $f$ at some time $t\in \N$ and point $x \in \Omega$ by $1$. Then the sensitivity condition \eqref{eq: sensitivity} holds. But when we convert $f$ into a consistent function $F$ on the tree, using \eqref{eq: contraction}, that little change propagates up the tree. It affects the values of $F$ not just at one leaf $v \ni x$ but all of the ancestors of $v$ as well, and this could be too many changes to protect. In other words, consistency on the tree makes sensitivity too high, which in turn jeopardizes privacy.

To halt the propagation of small changes up the tree, the last two steps of the algorithm restrict the function {\em only} on the leaves of the subtree. This restriction controls sensitivity: a change to $F(v)$ made in one leaf $v$ does not propagate anymore, and the resulting function $d(v,t)$ is differentially private. In the last step, we extend the function $d(v,t)$ from the leaves to the whole tree--an operation that preserves privacy--and use it as an update to the previous, already differentially private, synthetic stream $G(\cdot,t-1)$. 

These considerations lead to the following privacy guarantee as announced in Section~\ref{s: dp streaming synthetic}, i.e. in the sense of Definition~\ref{def: DP}, for the $\norm{\cdot}_\nabla$ norm.

\begin{theorem}[Privacy of PHDStream]		\label{thm: PHDStream privacy}
	The PHDStream algorithm is $\e$-differentially private.
\end{theorem}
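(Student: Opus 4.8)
The plan is to show that, among all the time steps of Algorithm~\ref{alg: PHDStreamTree}, a $\norm{\cdot}_\nabla$-neighboring change of the input touches only one of them, and then to certify that single step by combining the privacy of $\PrivTree$ (Theorem~\ref{thm: PrivTree privacy}) with the Laplace mechanism, using composition and post-processing. First, fix two input streams $f,\tilde{f}$ with $\norm[0]{f-\tilde{f}}_\nabla=1$. By the description of $\norm{\cdot}_\nabla$-neighbors given around \eqref{eq: sensitivity}, there is a single time $t_0\in\N$ and a single location $x_0\in\Omega$ such that the differential streams agree, $\nabla f(x,t)=\nabla\tilde{f}(x,t)$, for all $(x,t)\ne(x_0,t_0)$, and differ by $\pm1$ at $(x_0,t_0)$. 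Lifting to the tree via \eqref{eq: contraction}, the differential streams $\nabla F(\cdot,t)$ and $\nabla\tilde{F}(\cdot,t)$ then coincide for every $t\ne t_0$, while at $t=t_0$ their difference is supported on the leaf $v_0\in\LL(T)$ containing $x_0$ together with the ancestors of $v_0$, taking the value $\pm1$ at each of these vertices. I will use two consequences of this: the difference $\nabla F(\cdot,t_0)-\nabla\tilde{F}(\cdot,t_0)$ has $\norm{\cdot}_{C(T)}$ norm exactly $1$ (its only nonzero leaf of $T$ is $v_0$); and for every subtree $\tau$ of $T$ containing the root, its restriction to $\LL(\tau)$ has $\ell_1$-norm at most $1$, since the root-to-$v_0$ path in $T$ meets $\LL(\tau)$ in exactly one vertex (in a subtree built by $\PrivTree$ every internal node has all of its children, so no leaf of $\tau$ is an ancestor of another).

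Next I would isolate the step at time $t_0$. For $t<t_0$ the algorithm reads identical data in the two runs, so the joint law of the partial output $\bigl(G(\cdot,s)\bigr)_{s<t_0}$ is the same; write $\gamma\coloneqq G(\cdot,t_0-1)$ for its last term, the only part of it that step $t_0$ uses. For $t>t_0$, every quantity computed by the algorithm---and every term $g(\cdot,t)$ of the output produced via \eqref{eq: contraction reversed}---is a function of $G(\cdot,t_0)$, of the later (common) differential data, and of fresh independent noise; likewise $g(\cdot,t)$ for $t\le t_0-1$ is fresh randomness applied to the corresponding $G(\cdot,t)$. Hence, conditionally on any value of the partial output $\bigl(G(\cdot,s)\bigr)_{s<t_0}$, the entire output stream $g$ of Algorithm~\ref{alg: PHDStream} is a fixed randomized post-processing of $G(\cdot,t_0)$, and $G(\cdot,t_0)$ is produced by the three update steps in the for-loop of Algorithm~\ref{alg: PHDStreamTree} from the data $\nabla F(\cdot,t_0)$ with $\gamma$ held fixed. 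By the post-processing property of differential privacy it therefore suffices to prove that this map $\nabla F(\cdot,t_0)\mapsto G(\cdot,t_0)$ is $\e$-differentially private with respect to the single-point change above; averaging the resulting inequality over the common law of $\bigl(G(\cdot,s)\bigr)_{s<t_0}$ then yields \eqref{eq: DP} in the sense of Definition~\ref{def: DP} for the $\norm{\cdot}_\nabla$ norm. (Equivalently, this is an instance of adaptive composition over the time axis in which only the step at $t_0$ sees modified data, so only it contributes to the privacy loss.)

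To certify the step at $t_0$, I would split it into two sub-mechanisms. The first outputs $T(t_0)=\PrivTree_T\bigl(\gamma+\nabla F(\cdot,t_0),\,\e/2,\,\theta\bigr)$; since its two possible inputs $\gamma+\nabla F(\cdot,t_0)$ and $\gamma+\nabla\tilde{F}(\cdot,t_0)$ differ by exactly $1$ in $\norm{\cdot}_{C(T)}$, Theorem~\ref{thm: PrivTree privacy} makes this sub-mechanism $\tfrac{\e}{2}$-differentially private. The second, given $T(t_0)=\tau$, adds independent Laplace noise of scale $2/\e$ to the vector $\bigl(\nabla F(v,t_0)\bigr)_{v\in\LL(\tau)}$ to form $d(\cdot,t_0)$, and then returns $G(\cdot,t_0)=\gamma+\Ext_T(d(\cdot,t_0))$. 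The $\ell_1$-sensitivity of the queried vector is at most $1$ by the path observation from the first paragraph, uniformly in $\tau$, so the Laplace mechanism is $\tfrac{\e}{2}$-differentially private, and the subsequent deterministic map $d\mapsto\gamma+\Ext_T(d)$ preserves this by post-processing. Adaptive composition (the privacy parameters add) then shows the pair $\bigl(T(t_0),G(\cdot,t_0)\bigr)$, hence its marginal $G(\cdot,t_0)$, to be $\bigl(\tfrac{\e}{2}+\tfrac{\e}{2}\bigr)=\e$-differentially private, which is exactly the statement required by the previous paragraph.

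The step I expect to be the main obstacle is controlling the subtree $T(t_0)$. Because $T(t_0)$ is itself data-dependent and random, one cannot condition on a fixed subtree; the argument must be routed through adaptive composition, and the Laplace sensitivity bound has to be verified uniformly over all subtrees that $\PrivTree$ could possibly output. That uniform bound rests entirely on the combinatorial fact that a single-point change affects at most one leaf of the released subtree, which is precisely why the algorithm restricts the noisy update $d(\cdot,t)$ to $\LL(T(t))$ and only afterwards applies $\Ext_T$: as the discussion preceding the theorem notes, restoring consistency before restricting to the leaves would let the change propagate up the tree and make the sensitivity grow with the tree's depth, destroying the guarantee.
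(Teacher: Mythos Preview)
Your proposal is correct and follows essentially the same route as the paper's proof: reduce to PHDStreamTree by post-processing, observe that $\norm{\cdot}_\nabla$-neighbors differ on the differential stream at a single time $t_0$, and then certify that step by composing the $\e/2$-privacy of $\PrivTree_T$ (Theorem~\ref{thm: PrivTree privacy}) with the $\e/2$-privacy of the Laplace mechanism on the leaves. The paper's own argument is a terse version of exactly this; your write-up is more explicit about the uniform $\ell_1$-sensitivity bound over all subtrees $\tau$ (via the ``one leaf on the root-to-$v_0$ path'' observation) and about the adaptive nature of the composition, points the paper leaves implicit.
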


A formal proof of Theorem~\ref{thm: PHDStream privacy} is given in Appendix~\ref{app: PHDStream proof}.

\section{Counters and selective counting}
A key step of Algorithm~\ref{alg: PHDStreamTree} at any time $t$ is Step~{6} where we add noise to the leaves of the subtree $T(t)$. Consider a node $v \in V(T)$ that becomes a leaf in the subtree $T(t)$ for times $t \in N_v \subseteq N$. In the algorithm, we add an independent noise at each time in $N_v$. Focusing only on a particular node, can we make this counting more efficient? The question becomes: given an input stream of values for a node $v$ as $\nabla F(v, t)$ for $t \in N_v$, can we find an output stream of values $d(v, \cdot)$ in a differentially private manner while ensuring that the input and output streams are close to each other?

The problem of releasing a stream of values with differential privacy has been well studied in the literature \cite{Dwork2010ContinualDP, Chan2010ContinualPrivateStats}. We will use some of these known algorithms at each node $v$ of the tree to perform the counting more efficiently. Let us first introduce the problem more generally using the concept of {\em Counters}.

\subsection{Counters}
\begin{definition}\label{def: Counter}
    An $(\a, \d)$-accurate counter $\CC$ is a randomized streaming algorithm that estimates the sum of an input stream of values $f:\N \to\R$ and maps it to an output stream of values $g: \N\to\R$ such that for each time $t\in\N$, 
    $$\mathbb{P} \biggl\{ \bigg\lvert g(t)- \sum_{t'\leq t} f(t') \bigg\rvert \leq \a(t, \d) \biggr\} \geq 1-\d,$$
    where the probability is over the randomness of $\CC$ and $\d$ is a small constant.
\end{definition}

Furthermore, we are interested in counters that satisfy differential privacy guarantees as per Definition~\ref{def: DP} with respect to the norm $\norm{f} \coloneqq \sum_{t \in \N} \abs{f(t)}$. Note that we have not used the norm $\norm{\cdot}_\nabla$ here as the input to the counter algorithm will already be the differential stream $\nabla f$. The foundational work of \cite{Chan2010ContinualPrivateStats} and \cite{Dwork2010ContinualDP} introduced private counters for the sum of bit-streams but the same algorithms can be applied to real-valued streams as well. In particular, we will be using the Simple II, Two-Level, and Binary Tree algorithms from \cite{Chan2010ContinualPrivateStats}, hereafter referred to as Simple, Block, and Binary Tree Counters respectively. The counter algorithms are included in Appendix \ref{app:counters} for completeness. We also discuss there how the Binary Tree counter is only useful if the input stream to the counter has a large time horizon and thus we limit our experiments in Section~\ref{s: experiments and results} to Simple and Block counters.

\subsection{PHDStreamTree with counters}
Algorithm~\ref{alg: PHDStreamTree with counters} is a version of Algorithm~\ref{alg: PHDStreamTree} with counters. Here, we create an instance of some counter algorithm for each node $v \in V(T)$. Algorithm~\ref{alg: PHDStreamTree with counters} is agnostic to the counter algorithm used, and we can even use different counter algorithms for different nodes of the tree. Since at any time $t \in \N$, we only count at the leaf nodes of the tree $T(t)$, the counter $C_v$ is updated for any node $v\in V(T)$ if and only if $v \in \LL(T(t))$. Hence the input to the counter $C_v$ is the restriction of the stream $\nabla F$ on the set of times in $N_v$, that is $\restr{\nabla F(v, \cdot)}{N_v}$, where $N_v=\{t\in\N \mid v\in \LL(T(t))\}$. In the subsequent sections, we discuss a few general ideas about the use of multiple counters and selectively updating some of them. We use these discussions to prove the privacy of Algorithm~\ref{alg: PHDStreamTree with counters} in Subsection~\ref{s: privacy of PHDStreamTree with counters}.

\begin{algorithm}[ht]
    \begin{algorithmic}[1]
        \STATE {\bfseries Input:} $F \in C(T)$, privacy budget parameter $\e$, threshold count for a node $\theta$.
        \STATE {\bfseries Output:}  A stream $G \in C(T)$.
        \STATE Initialize $G(v,0)=0$ for all $v \in V(T)$.
        \STATE Initialize a counter $\CC_v$ with privacy budget $\e/2$ for all nodes $v \in V(T)$.
        \FOR{every time $t \in \N$}
            \STATE $T(t) \leftarrow \PrivTree_T \left( G(\cdot,t-1) + \nabla F(\cdot,t), \e/2, \theta \right)$.
            \STATE $d(v,t) \leftarrow 
        	\begin{cases}
        		\CC_v(\nabla F(v,t)) & \text{if } v \in \LL(T(t)) \\
        		0 & \text{otherwise}.
        	\end{cases}$
            \STATE $G(v,t) \leftarrow G(v,t-1) + \Ext_T(d(\cdot,t))	\quad \forall v \in V(T)$.
        \ENDFOR
    \end{algorithmic}
    \caption{PHDStreamTree with counters}
    \label{alg: PHDStreamTree with counters}
\end{algorithm}

\subsection{Multi-dimensional counter}
To efficiently prove the privacy guarantees of our algorithm, which utilizes many counters, we introduce the notion of a multi-dimensional counter. A $d$-dimensional counter $\CC$ is a randomized streaming algorithm consisting of $d$ independent counters $\CC_1, \CC_2, \ldots, \CC_d$ such that it maps an input stream $f:\N\to\R^d$ to an output stream $g:\N\to\R^d$ with $g(t) = \inparanth{\CC_1(f(t)_1), \CC_2(f(t)_2), \ldots, \CC_d(f(t)_d)}$ for all $t \in \N$. For differential privacy of such counters, we will still use Definition~\ref{def: DP} but with the extension of the norm to streams with multi-dimensional output as $\norm{f} \coloneqq \sum_{t \in \N} \norm{f(t)}_{\ell_1}$.

\subsection{Selective Counting}
Let us assume that we have a set of $k$ counters $\set{\CC_1, \CC_2, \dots, \CC_k}$. At any time $t\in\N$, we want to activate exactly one of these counters as selected by a randomized streaming algorithm $\MM$. The algorithm $\MM$ depends on the input stream $f$ and optionally on the entire previous output history, that is the time indices when each of the counters was selected and their corresponding outputs at those times. We present this idea formally as Algorithm~\ref{alg:selective_counting}.

\begin{algorithm}[h]
    \begin{algorithmic}[1]
        \STATE {\bfseries Input:} An input stream $f$, a set  $\{\CC_1, \CC_2, \dots, \CC_k\}$ of $k$ counters, a counter-selecting differentially private algorithm $\MM$.
        \STATE {\bfseries Output:}  An output stream $g$.
        \STATE Initialize $J_l \leftarrow \emptyset, \forall\, l \in \{1, 2, \dots, k\}$.
        \FOR{$t=1$ to $\infty$}
            \STATE $l_t \leftarrow \MM (f(t), g(t-1))$.
            \STATE $J_{l_t} \leftarrow J_{l_t} \cup \set{t}$; Add time index $t$ for counter $l_t$.
            \STATE Set $g(t) \leftarrow \inparanth{l_t, \CC_{l_t} (f(t))}$; Selected counter and updated count.
        \ENDFOR
    \end{algorithmic}
    \caption{Online Selective Counting}
    \label{alg:selective_counting}
\end{algorithm}

\begin{theorem}{(Selective Counting)} \label{thm:selective_counting}
Let $\CC_i, i\in \{1, 2, \dots, k\}$ be $\e_C$-DP. Let $\MM$ be a counter-selecting streaming algorithm that is $\e_M$-DP. Then, Algorithm~\ref{alg:selective_counting} is $(\e_M+\e_C)$-DP.
\end{theorem}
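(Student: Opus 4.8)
The plan is to prove privacy of Algorithm~\ref{alg:selective_counting} by a composition-style argument, tracking how the output stream $g$ depends on the two sources of randomness: the selector $\MM$ and the collection of counters $\{\CC_i\}$. The output at time $t$ is the pair $(l_t, \CC_{l_t}(f(t)))$, so the full output history $g$ is determined by (a) the sequence of selections $l_1,l_2,\dots$, which is produced by $\MM$, and (b) for each counter $i$, the outputs of $\CC_i$ on the substream $\restr{f}{J_i}$ of times at which it was activated. The key structural observation is that, once the selection sequence $(l_t)_t$ is fixed, the partition of time indices $\{J_i\}$ into the $k$ counters is fixed, the substreams fed to each counter are fixed (deterministic functions of $f$ and the selections), and the counters run independently. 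So conditionally on the selections, $g$ is just a deterministic relabeling of the independent tuple $(\CC_1(\restr{f}{J_1}), \dots, \CC_k(\restr{f}{J_k}))$.

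First I would set up neighboring streams $f, \tilde f$ with $\norm{f-\tilde f}=\sum_t \norm{f(t)-\tilde f(t)}_{\ell_1} = 1$; since a single coordinate at a single time changes by a total of $1$, this affects the input to $\MM$ at that one time and the input to exactly one counter (the one selected at that time) at that one time. The clean way to organize the bound is: for any measurable output event $S$, decompose over the possible selection sequences $\sigma = (l_t)_t$, writing $\Pr{}{g \in S} = \sum_{\sigma} \Pr{}{(l_t)_t = \sigma}\cdot \Pr{}{g \in S \mid (l_t)_t=\sigma}$. The first factor is controlled by $\e_M$-DP of $\MM$ (applied over the stream, using that $\MM$ is itself a streaming DP algorithm in the sense of Definition~\ref{def: DP} with the stream norm), and the second factor, conditionally on $\sigma$, is controlled by the DP of the counters. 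Here I would use the multi-dimensional counter abstraction: conditioned on $\sigma$, the map $f \mapsto (\CC_1(\restr{f}{J_1}),\dots,\CC_k(\restr{f}{J_k}))$ is a $k$-fold independent product of $\e_C$-DP counters, hence is $\e_C$-DP with respect to the aggregated norm $\sum_i \norm{\restr{f}{J_i}} = \norm{f}$ — and a change of $1$ in $f$ translates to a change of at most $1$ in this aggregated norm, so the conditional factor picks up at most $e^{\e_C}$. Multiplying the two factors and summing over $\sigma$ gives $e^{\e_M+\e_C}$.

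The subtlety I want to be careful about — and what I expect to be the main obstacle — is the conditioning itself: the selection $l_t = \MM(f(t), g(t-1))$ depends on the \emph{past counter outputs} $g(t-1)$, so $\MM$'s randomness and the counters' randomness are genuinely interleaved rather than independent. Naively "conditioning on $\sigma$" is therefore not conditioning on an independent object, and one must argue that the factorization above is still valid. The right way to handle this is to unroll the coupling time step by time step: at each time $t$, the new output symbol $(l_t, \CC_{l_t}(f(t)))$ is generated from the already-revealed history $g(t-1)$ using fresh randomness (the internal coins of $\MM$ at step $t$, and the fresh coins of counter $l_t$ at step $t$), and these fresh coins are independent of everything revealed so far. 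This lets me treat the whole process as an adaptive composition where at each step the "mechanism" applied to the changed input coordinate is $\e_M$-DP (when the change hits $\MM$'s query) or $\e_C$-DP (when it hits the active counter's update) — but since the single unit of change can only ever reach one counter and one $\MM$-query \emph{in total} across the whole stream (not per step), the total privacy loss is $\e_M + \e_C$ rather than something scaling with the horizon. Making this "budget is spent at most once on each side" argument rigorous is the crux; I would do it by formalizing the counters as a single multi-dimensional counter and invoking its $\e_C$-DP guarantee as a black box for the counter side, and invoking $\e_M$-DP of $\MM$ as a black box for the selector side, then combining via a standard two-mechanism adaptive composition lemma.

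Finally I would note that the accuracy side of the claim is immediate and requires no argument here: whichever counter is selected at time $t$ inherits its own $(\a,\d)$-accuracy guarantee on its substream, so the theorem as stated — which only asserts the $(\e_M+\e_C)$-DP conclusion — follows once the privacy composition above is established.
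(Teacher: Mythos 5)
Your proposal is correct and follows essentially the same route as the paper's proof: identify the single time $\tau$ at which the neighboring streams differ, factor the probability of the output trajectory into the selector's choice and the selected counter's output, charge $e^{\e_M}$ to $\MM$ and $e^{\e_C}$ to the (multi-dimensional) counter exactly once each, and observe that all other times contribute no additional privacy loss. If anything, you are more explicit than the paper about the adaptivity subtlety --- that the selection at time $t$ depends on past counter outputs, so the factorization must be justified by conditioning on the revealed history step by step and invoking each component's DP guarantee over its entire input/output substream rather than a single step --- which is exactly the right thing to be careful about here.
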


The proof of Theorem~\ref{thm:selective_counting} is given in Appendix~\ref{app: proof selective counting}. Note the following about Algorithm~\ref{alg:selective_counting} and Theorem~\ref{thm:selective_counting}: (1)~each individual counter can be of any finite dimensionality, (2)~we do not assume anything about the relation among the counters, (3)~the privacy guarantees are independent of the number of counters, and (4)~the algorithm can be easily extended to the case when the input streams to all sub-routines $\CC_1, \CC_2, \ldots, \CC_k$, and $\MM$ may be different.

\subsection{Privacy of PHDStreamTree with counters}
\label{s: privacy of PHDStreamTree with counters}
We first show that Algorithm~\ref{alg: PHDStreamTree with counters} is a version of the selective counting algorithm (Algorithm~\ref{alg:selective_counting}). Let $\PP(T)$ be a set of all possible subtrees of the tree $T$. Since each node $v \in T$ is associated with a counter, the collection of counters in $\LL(S)$ for any $S \in \PP(T)$ is a multi-dimensional counter. Moreover, for any neighboring input streams $F$ and $\tilde F$ with $\norm{F-\tilde F}_{\nabla}=1$, at most one leaf node will differ in the count. Hence $\LL(S)$ is a counter with $\e/2$-DP. $\PP(T)$ is thus a set of multi-dimensional counters, each satisfying $\e/2$-DP. $\PrivTree_T$ at time $t$ selects one counter as $\LL(T(t))$ from $\PP(T)$ and performs counting. Hence, by Theorem~\ref{thm:selective_counting}, Algorithm~\ref{alg: PHDStreamTree with counters} is $\e$-DP.
\section{Experiments and results}\label{s: experiments and results}

 \begin{figure*}[h!]
    \vskip 0.1in
     \centering
     \begin{subfigure}{\textwidth}
         \centering
         \includegraphics[width=\textwidth]{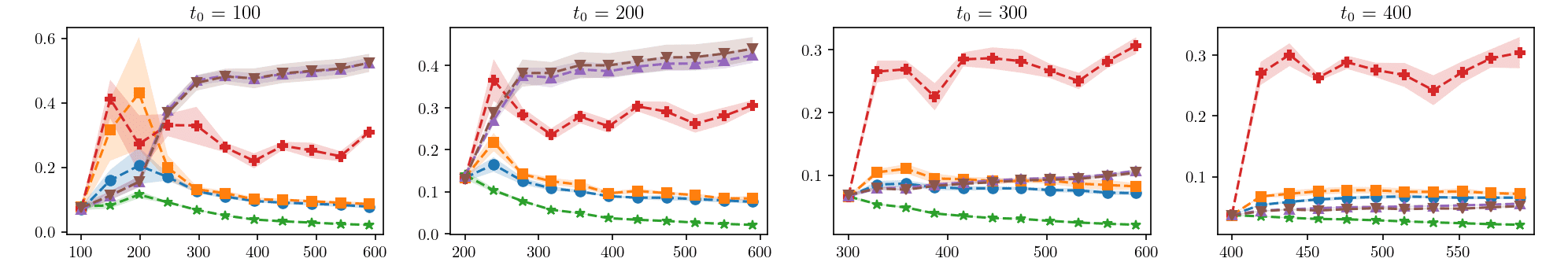}
        \caption{Gowalla}
        \label{fig:result_init_time_sub_gowalla}
     \end{subfigure}
     \hfill
     \begin{subfigure}{\textwidth}
         \centering
         \includegraphics[width=\textwidth]{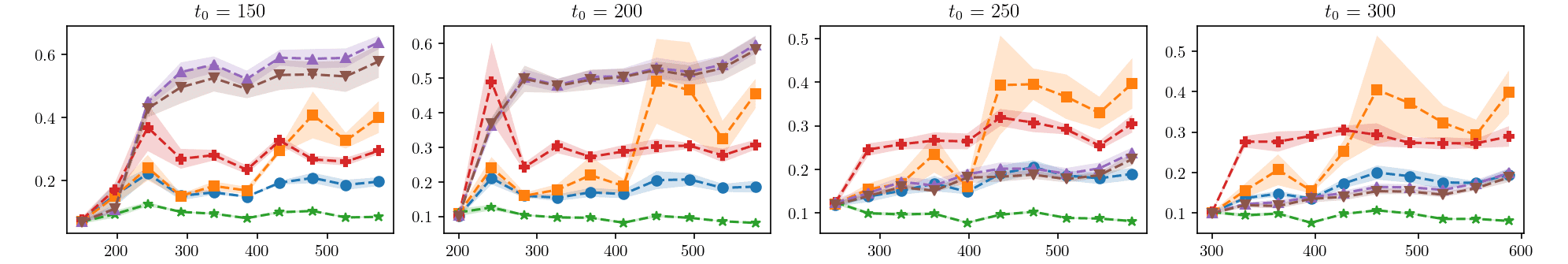}
        \caption{Gowalla with deletion}
        \label{fig:result_init_time_sub_gowalla_with_deletion}
     \end{subfigure}
     \hfill
     \begin{subfigure}{\textwidth}
         \centering
         \includegraphics[width=\textwidth]{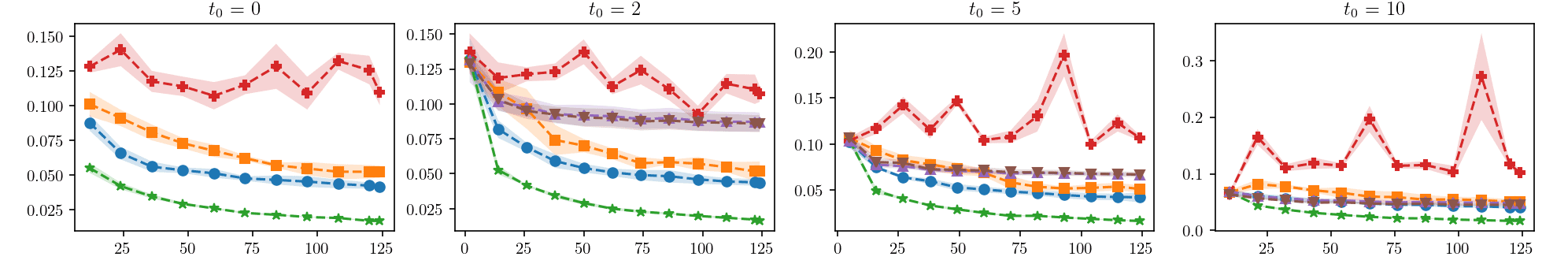}
        \caption{New York taxi (over NY State)}
        \label{fig:result_init_time_sub_ny_state}
     \end{subfigure}
     \hfill
     \begin{subfigure}{\textwidth}
         \centering
         \includegraphics[width=\textwidth]{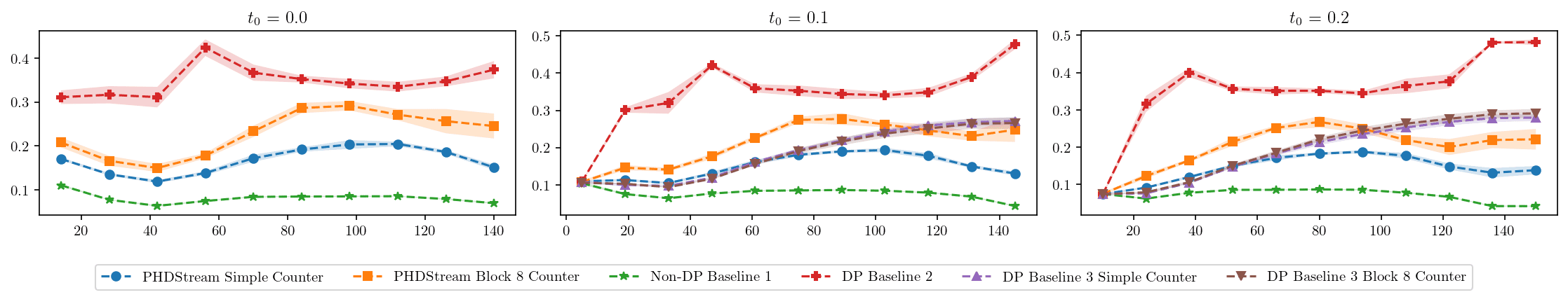}
        \caption{Concentric circles with deletion}
        \label{fig:result_init_time_sub_circles_motion_deletion}
     \end{subfigure}
    \caption{The progression of relative error in small range queries with time. All experiments are with privacy budget $\e=0.5$. Each subplot has a time horizon on the x-axis and corresponds to a particular value of $t_0$ (increasing from left to right).}
    \label{fig:result_init_time}
    \vskip -0.1in
\end{figure*}

\subsection{Datasets}
We conducted experiments on datasets with the location (latitude and longitude coordinates) of users collected over time. We analyzed the performance of our method on two real-world and three artificially constructed datasets.

\textbf{Real-world Datasets:} We use two real-world datasets: Gowalla \cite{gowalla} and NY Taxi \cite{ny_data}. The Gowalla dataset contains location check-ins by a user on the social media app Gowalla. The NY Taxi dataset \cite{ny_data} contains the pickup and drop-off locations and times of Yellow Taxis in NY.

\textbf{Simulated Datasets:} Our first simulated dataset is derived from the Gowalla dataset. Motivated by the problem of releasing the location of active coronavirus infection cases, we consider each check-in as a report of infection and remove this check-in after $30$ days indicating that the reported individual has recovered. This creates a version of Gowalla with deletion. The second dataset contains points sampled on two concentric circles but the points first appear on one circle and then gradually move to the other circle. This creates a dataset where the underlying distribution changes dramatically. We also wanted to explore the performance of our algorithm based on the number of data points it observes in initialization and then at each batch. Thus for this dataset, we first fix a constant batch size for the experiment and then index time accordingly. Additionally, we keep initialization time $t_0$ as a value in $[0,1]$ denoting the proportion of total data the algorithm uses for initialization.

\subsection{Initialization}
Many real-world datasets such as Gowalla have a slow growth at the beginning of time. In practice, we can hold the processing of such streams until some time $t_0\in\N$, which we refer to as initialization time, until sufficient data has been collected. Thus for any input data stream $f:\Omega\times\N\to\R$ and initialization time $t_0\in\N$, we use a modified stream $\hat f: \Omega\times\N\to\R$ such that $\hat f(\cdot, t) = f(\cdot, t+t_0-1)$ for all $t\in\N$. We discuss the effects of initialization time $t_0$ on the algorithm performance in detail in Section~\ref{s: results}

\subsection{Baselines}
Let $\e$ be our privacy budget. Consider the offline synthetic data generation algorithm as per \cite{privtree}, which is to use $\PrivTree_T$ with privacy budget $\e/2$ to obtain a subtree $S$ of $T$, generate the synthetic count of each node $v\in \LL(S)$ by adding Laplace noise with scale $2/\e$, and sample points in each node $v \in \LL(S)$ equal to their synthetic count. Let us denote this algorithm as $\PrivTree_T+\Counting$. A basic approach to convert an offline algorithm to a streaming algorithm is to simply run independent instances of the algorithm at each time $t \in \N$. We use this idea to create the following three baselines and compare our algorithm with them.

\textbf{Baseline 1) Offline PrivTree on stream:} At any time $t\in\N$, we run an independent version of the offline $\PrivTree_T+\Counting$ on $f(\cdot, t)$, that is the stream data at time $t$. We want to emphasize that to be differentially private, this baseline requires a privacy parameter that scales with time $t$ and it does NOT satisfy differential privacy with the parameter $\e$. Thus we expect it to perform much better with an algorithm that is $\e$-DP. We still use this method as a baseline since, due to a lack of DP algorithms for streaming data, industry applications sometimes naively re-run DP algorithms as more data is collected. However, we show that PHDStream performs competitively close to this baseline.

\textbf{Baseline 2) Offline PrivTree on differential stream} Similar to Baseline~1, at any time $t\in\N$, we run an independent version of the offline $\PrivTree_T+\Counting$ algorithm, but with the input data $\nabla f(\cdot, t)$, that is the differential data observed at $t$. This baseline satisfies $\e$-DP. For a fair comparison, at time $t$, any previous output is discarded and the current synthetic data is scaled by the factor ${|f(\cdot, t)|}/{|\nabla f(\cdot, t)|}$.

\textbf{Baseline 3) Initialization with PrivTree, followed by counting with counters:} We first use the offline $\PrivTree_T+\Counting$ algorithm at only the initialization time $t_0$ and get a subtree $S$ of $T$ as selected by PrivTree. We then create a counter for all nodes $v\in \LL(S)$ with privacy budget $\e$. At any time $t>t_0$, $S$ is not updated, and only the counter for each of the nodes in $\LL(S)$ is updated using the differential stream $\nabla f(\cdot, t)$. This baseline also satisfies $\e$-DP. Note that this algorithm has twice the privacy budget for counting at each time $t>t_0$ as we do not update $S$. We only have results for this baseline if $t_0>0$. We show that PHDStream outperforms this baseline if the underlying density of points changes sufficiently with time.

\subsection{Performance metric}\label{s: metric}
We evaluate the performance of our algorithm against {\em range counting queries}, that count the number of points in a subset of the space $\Omega$. For $\Omega'\subseteq\Omega$, the associated range counting query $q_{\Omega'}$ over a function $h:\Omega \to \R$ is defined as $q_{\Omega'}(h) \coloneqq \sum_{x\in \Omega'}h(x)$. In our experiments, we use random rectangular subsets of $\Omega$ as the subregion for a query. Similar to \cite{privtree}, we generate three sets of queries: $10,000$ small, $5000$ medium, and $1000$ large with area in $[0.01\%, 0.1\%)$, $[0.1\%, 1\%)$, and $[1\%, 10\%)$ of the space $\Omega$ respectively. We use average relative error over a query set as our metric. At time $t$, given query set $Q$, the relative error of a synthetic stream $g:\Omega\times\N\to\R$ as compared to the input stream $f:\Omega\times\N\to\R$ is thus defined as, $$r(Q, f, g, t) \coloneqq \frac{1}{|Q|}\sum_{q\in Q} \frac{\abs{q(f(\cdot, t))-q(g(\cdot, t))}}{\max{\inparanth{q(f(\cdot, t)), 0.001|f(\cdot, t)| }}}$$
where $0.001|f(\cdot, t)|$ or $1\%$ of $|f(\cdot, t)|$ is a small additive term to avoid division by $0$. We evaluate the metric for each query set at a regular time interval and report our findings.

\subsection{Results}\label{s: results}

Our analysis indicates that PHDStream performs well across various datasets and hyper-parameter settings. We include some of the true and synthetic data images in Appendix~\ref{app: synthetic scatter plots}. Due to space constraints, we limit the discussion of the result to a particular setting in Figure~\ref{fig:result_init_time} where we fix the privacy budget to $\e=1$, sensitivity to $2$, query set to small queries, and explore different values for initialization time $t_0$. For further discussion, refer to Appendices~\ref{app: synthetic scatter plots} and~\ref{app: more results}.

PHDStream remains competitive to the challenging non-differentially private Baseline~1 and in almost all cases, it outperforms the differentially private Baseline~2. It also outperforms Baseline~3 if we initialize the algorithm sufficiently early. We discuss above mentioned observations together with the effect of various hyper-parameters below.

\textbf{Initialization time:} If the dataset grows without changing the underlying distribution too much, it seems redundant to update $T(t)$ with $\PrivTree_T$ at each time $t \in \N$. Moreover, when counting using fixed counters, we know that error in a counter grows with time so the performance of the overall algorithm should decrease with time. We observe the same for higher values of initialization time in Figure~\ref{fig:result_init_time}. Moreover, we see that Baseline~3, which uses PrivTree only once, outperforms PHDStream for such cases.

\textbf{Counter type:}
In almost all cases, for the PHDStream algorithm, the Simple counter has better performance than the Block 8 counter. This can be explained by the fact that for these datasets, the majority of nodes had their counter activated only a few times in the entire algorithm. For example, when using the Gowalla dataset with $t_0=100$, on average, at least $90\%$ of total nodes created in the entire run of the algorithm had their counter activated less than $40$ times.


\textbf{Sensitivity and batch size:} We explore various values of sensitivity and batch size and the results are as expected - low sensitivity and large batch size improve the algorithm performance. For more details see Appendix~\ref{app: more results}.

\section*{Acknowledgements}
G.K.\ and T.S.\ acknowledge support from  NSF DMS-2027248, NSF DMS-2208356, and NIH R01HL16351.  R.V.\ acknowledges support from NSF DMS-1954233, NSF DMS-2027299, U.S. Army 76649-CS, and NSF+Simons Research Collaborations on the Mathematical and Scientific Foundations of Deep Learning.



\bibliography{bibfile}
\bibliographystyle{icml2024}

\newpage
\appendix
\onecolumn


\section{$\PrivTree_T$ algorithm and its privacy} \label{app: PrivTree proof}

\begin{algorithm}[ht]
    \begin{algorithmic}[1]
        \STATE {\bfseries Input:} $F \in C(T)$, the privacy budget parameter $\e$, the threshold count for a node $\theta$.
        \STATE {\bfseries Output:} A subset $S$ of the tree $T$.
        \STATE Set $\lambda \leftarrow \frac{2\beta -1}{\beta -1}\cdot \frac{2}{\e}$ and $\delta \leftarrow \lambda \ln{\beta}$.
        \STATE Initialize an empty tree $S$.
        \STATE $U \leftarrow \set{root(T)}$; denotes a set of un-visited nodes.
        \FOR{$v \in U$}
            \STATE Add $v$ to $S$.
            \STATE $U \leftarrow U\backslash\set{v}$; mark $v$ visited.
            \STATE $b(v) \leftarrow F(v) - depth(v) \cdot \delta$; is a biased aggregate count of the node $v$.
            \STATE $b(v) \leftarrow \max \set{b(v), \theta-\delta}$; to ensure that $b(v)$ is not too small.
            \STATE $\tilde b(v) \leftarrow b(v) + Lap(\lambda)$; noisy version of the biased count.
            \IF{$(\tilde b(v) >\theta)$; noisy biased count is more than threshold}
                \STATE Add children of $v$ in $T$ to $U$.
            \ENDIF
        \ENDFOR        
    \end{algorithmic}
    \caption{$\PrivTree_T$}
    \label{alg: PrivTree}
\end{algorithm}

\begin{proof}[Proof of Theorem~\ref{thm: PrivTree privacy}]
    Similar to \cite{privtree} we define the following two functions:
    \begin{equation}\label{eq:rho}
        \rho_{\lambda, \theta}(x) = \ln{\inparanth{ \frac{ \prob{x+Lap(\lambda)>\theta } } { \prob{x-1+Lap(\lambda)>\theta} } } },
    \end{equation}
    \begin{equation}\label{eq:rho_top}
        \rho^\top_{\lambda, \theta}(x) = \begin{cases}
            1/\lambda, & x<\theta+1, \\
            \frac{1}{\lambda} \exp{ \inparanth{ \frac{\theta+1-x}{\lambda} } }, & otherwise.
        \end{cases}
    \end{equation}
    It can be shown that $\rho_{\lambda, \theta}(x) \leq \rho^\top_{\lambda, \theta}(x)$ for any $x$.
    
    Let us consider the neighbouring functions $F$ and $\tilde F$ such that $\norm{F-\tilde F}_{C(T)}=1$ and there exists $v_*\in V(T)$ such that $F(v_*) \neq \tilde F(v_*)$. Consider the output $S$, a subtree of $T$.
    
    \textit{Case 1: } $F(v_*)>\tilde F(v_*)$
    
    Note that there will be exactly one leaf node in $S$ that differs in the count for the functions $F$ and $\tilde F$. Let $v_1, v_2, \dots, v_k$ be the path from the root to this leaf node. As per Algorithm~\ref{alg: PrivTree}, let us denote the calculated biased counts of any node $v \in S$ for the functions $F$ and $\tilde F$ as $b(v)$ and $\tilde b(v)$, respectively. Then, for any $v \in S$, we have the following relations,
    \begin{align*}
        \tilde F(v) &= \begin{cases}
                    F(v) - 1, & v\in \set{v_1, v_2, v_3, \dots, v_k} \\
                    F(v), & otherwise,
                \end{cases} \\
        \tilde b(v) &= \begin{cases}
                    b(v) - 1, & v\in \set{v_1, v_2, v_3, \dots, v_k}, b(v) > \theta-\delta, \\
                    b(v), & otherwise.
                \end{cases}
    \end{align*}
    Let there exist an $m \in [k-1]$, s.t. $b(v_m) \geq \theta-\delta+1$ and $b(v_{m+1})=\theta-\delta$. We then show that there is a difference in count by at least $\delta$ between parent and child for all $i \in [2,m]$
    \[
        b(v_i) = F(v_i) - \delta \cdot depth(v)
        \leq F(v_{i-1}) - \delta\cdot (depth(v_{i-1})+1)
        \leq b(v_{i-1})-\delta .
    \]
    Thus we have,
    \begin{equation} \label{eq:countincrement}
        \begin{cases}
            b(v_{i-1}) \geq b(v_i)+\delta \geq \theta +1, & i\in[2,m], \\
            b(v_i) =\theta - \delta, & otherwise.
        \end{cases}
    \end{equation}
    Finally, we can show DP as follows,
    \begin{align*}
        \ln{\inparanth{\frac{\prob{\MM(F) = S}}{\prob{\MM(\tilde F) = S}}}} 
        = {}& \sum_{i=1}^{k-1} \ln{\inparanth{\frac{\prob{b(v_i) + Lap(\lambda) > \theta}}{\prob{\tilde b(v_i) + Lap(\lambda) > \theta}}}}
        + \ln{\inparanth{\frac{\prob{b(v_k) + Lap(\lambda) \leq \theta}}{\prob{\tilde b(v_k) + Lap(\lambda) \leq \theta}}}}
         \\
        = {} & \sum_{i=2}^{m} \ln{\inparanth{\frac{\prob{b(v_i) + Lap(\lambda) > \theta}}{\prob{b(v_i) - 1 + Lap(\lambda) > \theta}}}} 
        + \sum_{i=m+1}^{k-1} \ln{\inparanth{\frac{\prob{b(v_i) + Lap(\lambda) > \theta}}{\prob{b(v_i) + Lap(\lambda) > \theta}}}} 
        \\
        &+ \ln{\inparanth{\frac{\prob{b(v_k) + Lap(\lambda) \leq \theta}}{\prob{ \tilde b(v_k) +Lap(\lambda) \leq \theta}}}} \\
        & [\text{Using the fact that } \ln{x} \leq 0 \text{ for all } x \leq 1] \\
        \leq {} & \sum_{i=2}^{m} \ln{\inparanth{\frac{\prob{b(v_i) + Lap(\lambda) > \theta}}{\prob{b(v_i) - 1 + Lap(\lambda) > \theta}}}} 
        + 0 +0.
    \end{align*}
    Using equations \eqref{eq:rho} and \eqref{eq:rho_top} we have,
    \begin{align*}
        \ln{\inparanth{\frac{\prob{\MM(F) = S}}{\prob{\MM(\tilde F) = S}}}}
        \leq {} & \sum_{i=2}^{m}  \rho_{\lambda, \theta}{(b(v_i))} 
        \leq \sum_{i=2}^{m}  \rho_{\lambda, \theta}^\top{(b(v_i))}
        =  \rho_{\lambda, \theta}^\top{(b(v_m))} + \sum_{i=2}^{m-1}  \rho_{\lambda, \theta}^\top{(b(v_i))} \\
        = {} & \frac{1}{\lambda} + \sum_{i=2}^{m-1}  \frac{1}{\lambda} \exp{\inparanth{\frac{\theta+1-b(v_i)}{\lambda}}}
        = \frac{1}{\lambda} + \sum_{i=1}^{m-2}  \frac{1}{\lambda} \exp{\inparanth{\frac{\theta+1-b(v_{m-i})}{\lambda}}} \\
        & [\text{Using equation \eqref{eq:countincrement} and } b(v_{m-1}) \geq b(v_m) + \delta \geq \theta + 1] \\
        \leq {} & \frac{1}{\lambda} + \frac{1}{\lambda} \sum_{i=1}^{m-2}   e^{-\delta/\lambda}
        \leq \frac{1}{\lambda} + \frac{1}{\lambda} \cdot \frac{1}{1-e^{-\delta/\lambda}}
        =  \frac{1}{\lambda} \cdot \frac{2e^{\delta/\lambda}-1}{e^{\delta/\lambda}-1} = \e.
    \end{align*}

    \textit{Case 2: } $F(v_*)<\tilde F(v_*)$
    
    Following the same notations as Case 1, for any $v \in S$, we have the following relations,
    \begin{align*}
        \tilde F(v) &= \begin{cases}
                    F(v) + 1, & v\in \set{v_1, v_2, v_3, \dots, v_k}, \\
                    F(v), & otherwise,
                \end{cases} \\
        \tilde b(v) &= \begin{cases}
                    b(v) + 1, & v\in \set{v_1, v_2, v_3, \dots, v_k}, b(v) \geq \theta-\delta, \\
                    b(v), & otherwise.
                \end{cases}
    \end{align*}
    Finally, we can show DP as follows,
    \begin{align*}
        \ln{\inparanth{\frac{\prob{\MM(F) = S}}{\prob{\MM(\tilde F) = S}}}} 
        = {}& \sum_{i=1}^{k-1} \ln{\inparanth{\frac{\prob{b(v_i) + Lap(\lambda) > \theta}}{\prob{\tilde b(v_i) + Lap(\lambda) > \theta}}}}
        + \ln{\inparanth{\frac{\prob{b(v_k) + Lap(\lambda) \leq \theta}}{\prob{\tilde b(v_k) + Lap(\lambda) \leq \theta}}}}
         \\
        & [\text{Since } \ln{x} \leq 0 \text{ for all } x \leq 1] \\
        &\leq 0 + \ln{\inparanth{\frac{\prob{b(v_k) + Lap(\lambda) \leq \theta}}{\prob{\tilde b(v_k) + 1 + Lap(\lambda) \leq \theta}}}}
         \leq \e.
    \end{align*}

    Hence the algorithm is $\e$-differentially private.
    
\end{proof}


\section{Poof of privacy of PHDStream}\label{app: PHDStream proof}
\begin{proof}[Proof of Theorem~\ref{thm: PHDStream privacy}]
    It suffices to show that PHDStreamTree is $\e$-differentially private, since PHDStream is doing a post-processing on its output which is independent of the input data stream. Let $f$ and $\tilde f$ be neighboring data streams such that $\norm{f-\tilde f}_\nabla=1$. Let $F$ and $\tilde F$ be the corresponding streams on the vertices of $T$ respectively. Since $f$ and $\tilde f$ are neighbors, there exists $\tau \in \N$ such that $\norm{\nabla F(\cdot, \tau) - \nabla \tilde F (\cdot, \tau)}_{C(T)}=1$. Note that, at any time $t$, PHDStreamTree only depends on $\nabla F_t$ from the input data stream. Hence for the differential privacy over the entire stream, it is sufficient to show that the processing at time $\tau$ is differentially private. Note that both $\PrivTree_T$ and the calculation of $d(\cdot, t)$ satisfy $\e/2$-differential privacy by Theorem~\ref{thm: PrivTree privacy} and by the Laplace Mechanism, respectively. Hence their combination, that is PHDStreamTree at time $\tau$, satisfies $\e$-differential privacy. At any $t \neq \tau$, $\nabla F(v, t)=\nabla \tilde F(v,t)$ for all $v \in T$. Hence PHDStreamTree satisfies $\e$-differential privacy for the entire input stream $F$.
\end{proof}


\section{Optimizing the algorithm}\label{app:optim}

In Algorithm~\ref{alg: PHDStreamTree}, at any time $t$, we have to perform computations for every node $v \in T$. Since $T$ can be a tree with large depth and the number of nodes is exponential with respect to depth, the memory and time complexity of PHDStreamTree, as described in Algorithm~\ref{alg: PHDStreamTree} is also exponential in depth of the tree $T$. Note however that we do not need the complete tree $T$, and can limit all computations to the subtree $T(t)$ as selected by PrivTree at time $t$. Based on this idea, we propose a version of PHDStream in Algorithm~\ref{alg: Compute efficient PHDStream} which is storage and runtime efficient.

\begin{algorithm}[ht]
    \begin{algorithmic}[1]
        \STATE {\bfseries Input:} Input data stream $f: \Omega \times \N \to \R$, a fixed and known tree $T$, the privacy budget parameter $\e$, the threshold count for a node $\theta$.
        \STATE {\bfseries Output:} A synthetic data stream $g: \Omega \times \N \to \R$.
        \STATE Set $\lambda \leftarrow \frac{2\beta -1}{\beta -1}\cdot \frac{2}{\e}$ and $\delta \leftarrow \lambda \ln{\beta}$.
        \STATE Initialize $C_a$, $C_n$, and $C_d$ and make them accessible to Algorithm~\ref{alg: Modified PrivTree}.
        \FOR{every time $t \in \N$}
            \STATE Set $T(t)$ as the output of Algorithm \ref{alg: Modified PrivTree} with parameters
            $(\nabla f(\cdot, t), \delta, \theta, \lambda, \e/2)$.
            \STATE Calculate $G(v) \leftarrow C_a(v) + C_n(v) + C_d(v)$ for all $v \in leaves(T(t))$.
            \STATE Convert $G$ to $g(\cdot, t)$ as per equation~\ref{eq: contraction reversed}.
        \ENDFOR        
    \end{algorithmic}
    \caption{Compute efficient PHDStream}
    \label{alg: Compute efficient PHDStream}
\end{algorithm}

\begin{algorithm}[h]
    \begin{algorithmic}[1]
        \STATE {\bfseries Input:} Count of points in $\Omega$ as $h: \Omega \to \R$, parameters related to privacy $\delta, \theta, \lambda, \e'$.
        \STATE {\bfseries Output:} A subtree $T^*$ of $T$.
        \STATE Initialize an empty subtree $T^*$ of $T$.
        \STATE Initialize a queue $Q$ of un-visited nodes.
        \STATE Push $root(T)$ to $Q$.
        \STATE Initialize an empty stack $S$.
        \WHILE{$Q$ is not empty}
            \STATE Pop $v$ from $Q$.
            \STATE Add $v$ to the subtree $T^*$.
            \STATE Find $H(v)$ using $h$ as per equation~\ref{eq: contraction}.
            \STATE To enforce ancestor consistency, update\\ $C_a(v) \leftarrow \frac{1}{\beta}\cdot \inparanth{ C_a(parent(v)) + C_n(parent(v)) }$.
            \STATE Set $s(v) \leftarrow C_a(v) + C_n(v) + C_d(v)$
            \STATE Set $b(v) \leftarrow \inparanth{s(v) + H(v) - depth(v) \cdot \delta}$; as biased count of the node $v$.
            \STATE $b(v)\leftarrow\max\set{b(v), \theta-\delta}$; to ensure that $b(v)$ is not too small.
            \STATE $\hat b(v) \leftarrow b(v) + Lap(\lambda)$; noisy version of the biased count.
            \IF{$\inparanth{\hat b(v) >\theta}$ and $children(v)$ is not empty in $T$}
                \STATE Push $w$ to $Q$ for all $w \in children(v)$ in $T$.
                \STATE Push $v$ to $S$.
            \ELSE
                \STATE Update $C_n(v) \leftarrow C_n(v) + H(v) + Lap(2/\e')$ \label{lst: Counter Update in compute efficient PHDStream}.
            \ENDIF        
        \ENDWHILE
        \WHILE{$S$ is not empty}
            \STATE Pop $v$ from $S$.
            \STATE To enforce descendant consistency, update\\ $C_d(v) \leftarrow \sum_{w \in children(v)} \inparanth{C_d(w) + C_n(w)}$.
        \ENDWHILE        
    \end{algorithmic}
    \caption{Modified $\PrivTree_T$ subroutine}
    \label{alg: Modified PrivTree}
\end{algorithm}

Note that in Algorithm~\ref{alg: PHDStreamTree}, at any time $t\in\N$, we find the differential synthetic count of any node $v$ as $d(v,t)$ only if $v$ is a leaf in $T(t)$. To maintain the consistency, as given by equation~\ref{eq: consistency}, this count gets pushed up to ancestors or spreads down to the descendants by the consistency extension operator $Ext_T$. A key challenge in efficiently enforcing consistency is that we do not want to process nodes that are not present in $T(t)$. Since the operator $Ext_T$ is linear if we have the total differential synthetic count of all nodes $v$ till time $t$, that is $\sum_{t' \leq t} d(v,t')$, we can find the count of any node resulting after consistency extension. We track this count in a function $C_n : V(T) \to \R$ such that at any time $t$, $C_n(v)$ represents the total differential synthetic count of node $v$ for times $t' \leq t$ when $v \in leaves(T(t'))$. We update the value $C_n(v)$ for any node $v \in leaves(T(t))$ as,
$$
C_n(v) \leftarrow C_n(v) + \inparanth{\nabla F(v)+Lap(2/\e)}.
$$

Furthermore, to efficiently enforce consistency, we introduce $C_a$ and $C_d$ mappings $V(T) \to \R$. At any time $t$, $C_a(v)$ and $C_d(v)$ denote the count a node $v$ has received at time $t$ when enforcing consistency from its ancestors and descendants, respectively. Consistency due to counts being passed down from ancestors can be enforced with the equation
\begin{equation}\label{eq: ancestor consistency}
    C_a(v) = \frac{1}{\beta}\cdot \inparanth{ C_a(parent(v)) + C_n(parent(v)) }.
\end{equation}
Consistency due to counts being pushed up from descendants can be enforced with the equation
\begin{equation}\label{eq: descendant consistency}
    C_d(v) = \sum_{w \in children(v)} \inparanth{C_d(w) + C_n(w)}.
\end{equation}

We assume that $C_a(v)=C_n(v)=C_d(v)=0$ for all $v \in V(T)$ at the beginning of Algorithm~\ref{alg: Compute efficient PHDStream}. With the help of these functions, the synthetic count of any node $v \in T(t)$ can be calculated as $s(v) = C_a(v)+C_n(v)+C_d(v)$. 

Algorithm~\ref{alg: Modified PrivTree} is a modified version of PrivTree Algorithm~\ref{alg: PrivTree}. We use it as a subroutine of Algorithm~\ref{alg: Compute efficient PHDStream}, where at time $t$ it is responsible for creating the subtree $T(t)$ and updating the values of the functions $C_a$, $C_n$, and $C_d$.

The algorithm uses two standard data structures called \textit{Stack} and \textit{Queue} which are ordered sets where the order is based on the time at which the elements are inserted. We interact with the data structures using two operations \textit{Push} and \textit{Pop}. Push is used to insert an element, whereas pop is used to remove an element. The key difference between a Queue and a Stack is that the operation pop on a Queue returns the element inserted earliest but on a Stack returns the element inserted latest. Thus Queue and Stack follow the FIFO (first-in, first-out) and LIFO (last-in, first-out) strategy, respectively.

\subsection{Privacy of compute-efficient PHDStream}
At any time $t\in\N$, the key difference of Algorithm~\ref{alg: Compute efficient PHDStream} (compute efficient PHDStream) from Algorithm~\ref{alg: PHDStream} (PHDStream) is that (1) it does not compute $\nabla F(v,t)$ for any node $v \in T\setminus T(t)$, and (2) it does not use the consistency operator $Ext_T$ and instead relies on the mappings $C_a$, $C_n$, and $C_d$. Note that none of these changes affect the constructed tree $T(t)$ and the output stream $g(\cdot, t)$. Since both algorithms are equivalent in their output and Algorithm~\ref{alg: PHDStream} is $\e$-differentially private, we conclude that Algorithm~\ref{alg: Compute efficient PHDStream} is also $\e$-differentially private.


\section{Counters}
\label{app:counters}

We provide the algorithms for Simple, Block, and Binary Tree counter in Algorithms~\ref{alg:appendix_simple_counter},~\ref{alg:appendix_block_simple_counter}, and~\ref{alg:appendix_bt_counter} respectively. As proved in \cite{Dwork2010ContinualDP, Chan2010ContinualPrivateStats} for a fixed failure probability $\d$, ignoring the constants, the accuracy $\a$ at time $t$ for the counters Simple, Block, and Binary Tree is $\bigo{\sqrt{t}}$, $\bigo{t^{1/4}}$, and $\bigo{\inparanth{\ln{t}}^{3/2}}$ respectively. Hence the error in the estimated value in counters grows with time $t$. The result also suggests that for large time horizons, the Binary Tree algorithm is best to be used as a counter. However, for small values of time $t$, the bounds suggest that Simple and Block counters are perhaps more effective. Note that the optimal size of the block in a Block Counter is suggested to be $\lfloor \sqrt{t} \rfloor$ where $t$ is the time horizon for the stream. However, in our method, we do not know in advance the time horizon for the stream that is input for a particular counter. After multiple experiments, we found that a Block counter with size 8 was most effective for our method PHDStream for the datasets in our experiments. Figure~\ref{fig:compare_counters} shows an experiment comparing these counters. We observe that the Binary Tree algorithm should be avoided on streams of small time horizons, say $T<100$. Block counter appears to have lower errors for $T>30$. However, due to the large growth in error within a block, for $T<30$ it is unclear if the Block counter is better than the Simple counter.

\begin{algorithm}[h]
    \begin{algorithmic}[1]
        \STATE \textbf{Input:} Input stream $f:\N\to\R$, privacy budget $\e$
        \STATE \textbf{Output:}  Output stream $g:\N\to\R$
        \STATE Assume $g(0) \leftarrow 0$
        \STATE Set $g(t) \leftarrow g(t-1)+f(t) + Lap(\frac{1}{\e})$ for all $t\in\N$
    \end{algorithmic}
    \caption{Simple counter}
    \label{alg:appendix_simple_counter}
\end{algorithm}

\begin{algorithm}[h]
    \begin{algorithmic}[1]
        \STATE \textbf{Input:} Input stream $f:\N\to\R$, privacy budget $\e$, block size $B$
        \STATE \textbf{Output:}  Output stream $g:\N\to\R$
        \STATE $\a \leftarrow 0, \b \leftarrow 0, \b_{lastblock} \leftarrow 0$
        \FOR{$t=1, 2, \dots$}
            \STATE $\b \leftarrow \b + f(t)$
            \IF{$t=k\cdot B$, for some $k\in\N$}
                \STATE $\a \leftarrow 0$, $\b \leftarrow \b + Lap(\frac{2}{\e})$
                \STATE $\b_{lastblock} \leftarrow \b$
            \ELSE
                \STATE $\a \leftarrow \a + f(t) + Lap(\frac{2}{\e})$
            \ENDIF
            \STATE $g(t) \leftarrow \b_{lastblock} + \a$
        \ENDFOR
    \end{algorithmic}
    \caption{Block counter with block size $B$}
    \label{alg:appendix_block_simple_counter}
\end{algorithm}

\begin{algorithm}[h]
    \begin{algorithmic}[1]
        \STATE \textbf{Input:} Input stream $f:\N\to\R$, privacy budget $\e$, time horizon $T$
        \STATE \textbf{Output:}  Output stream $g:\N\to\R$
        \STATE Initialize $\a_i, \hat \a_i \leftarrow 0$ for all $i\in \N$
        \FOR{$t=1, 2, \dots$} 
            \STATE Let $b_n\dots b_1b_0$ be the $(n+1)$-bit binary representation of $t$, i.e., $t=\sum_{i=0}^{n} b_i2^i$
            \STATE $j \leftarrow \min{\set{i: b_i \neq 0}}$
            \STATE $\a_j \leftarrow \inparanth{\sum_{i=0}^{j-1} \a_i} + f(t)$
            \STATE $\hat \a_j \leftarrow \a_j + Lap(\frac{\log_2{T}}{\e})$
            \STATE $\a_i, \hat \a_i \leftarrow 0$ for all $i<j$
            \STATE $g(t) \leftarrow \sum_{i=0}^{n} \hat \a_i \cdot (\ind_{b_i=1})$
        \ENDFOR
    \end{algorithmic}
    \caption{Binary Tree counter}
    \label{alg:appendix_bt_counter}
\end{algorithm}

\begin{figure}[h]
    \vskip 0.1in
    \centering
    \begin{subfigure}[b]{0.45\columnwidth}
         \centering
         \includegraphics[width=\linewidth]{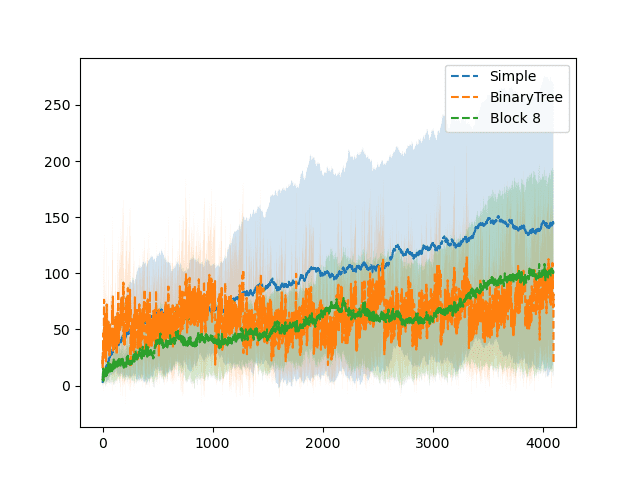}
         \caption{Long time horizon of $2^{12}=4096$}
     \end{subfigure}
     \hfill
     \begin{subfigure}[b]{0.45\columnwidth}
         \centering
         \includegraphics[width=\linewidth]{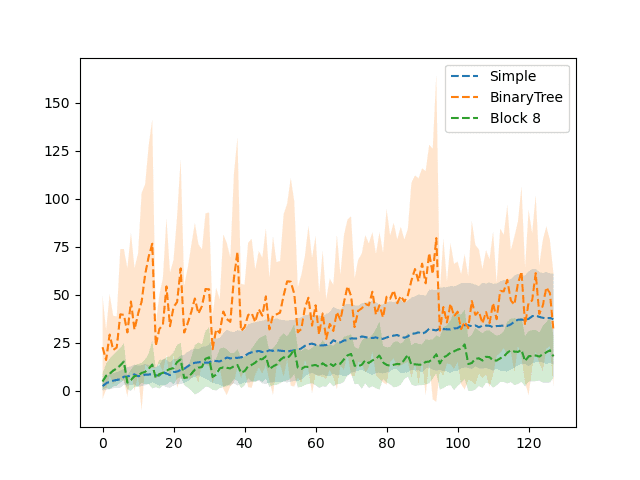}
         \caption{Short time horizon of $2^7 = 128$}
     \end{subfigure}
    \caption{Comparing Simple, Block with size 8, and Binary Tree counters with privacy budget $\e=0.5$. The true data is a random bit stream of 0s and 1s. On the y-axis, we have the absolute error in the value of the counter at any time averaged over 10 independent runs of the counter algorithm.}
    \label{fig:compare_counters}
    \vskip -0.1in
\end{figure}


\section{Privacy of selective counting}\label{app: proof selective counting}
\begin{proof}
    Let us denote Algorithm~\ref{alg:selective_counting} as $\AA$. Let $f$ and $\tilde f$ be neighboring data streams with $\norm{f-\tilde f}=1$. Without loss of generality, there exists $\tau\in\N$ such that $f(\tau)=\tilde f(\tau)+1$. Let $g$ be the output stream we are interested in. Let $\restr{f}{[t]}$ denote a restriction of the stream $f$ until time $t$ for any $t \in \N$.
    
    Since the input streams are identical till time $\tau$, we have, 
    \begin{align*}
        \prob{\AA \inparanth{\restr{f}{[\tau-1]}} = \restr{g}{[\tau-1]}}
        &= \prob{\AA\inparanth{\restr{\tilde f}{[\tau-1]}} = \restr{g}{[\tau-1]}}
    \end{align*}
    At time $\tau$, let $g(\tau) = (l_\tau, m_\tau)$. Since $\MM$ is $\e_M$-DP, we have,
    \begin{align*}
        \prob{\MM \inparanth{ \restr{f}{[\tau]}, \restr{g}{[\tau-1]} } = l_\tau}
        &\leq e^{\e_M} \cdot \prob{\MM\inparanth{\restr{\tilde f}{[\tau]}, \restr{g}{[\tau-1]}} = l_\tau}.
    \end{align*}
    Moreover, since $\MM_{l_\tau}$ is $\e_C$-DP,
    \begin{align*}
        \prob{\CC_{l_\tau} \inparanth{ \restr{f}{J_{l_t}} } = m_\tau}
        &\leq e^{\e_C} \cdot \prob{\CC_{l_\tau}\inparanth{\restr{\tilde f}{J_{l_t}}} = m_\tau}.
    \end{align*}
    Combining the above two we have,
    \begin{align*} 
        \prob{\AA \inparanth{\restr{f}{[\tau]} } = g(\tau)}
        &= \prob{\AA \inparanth{ \restr{f}{[\tau-1]} } = \restr{g}{[\tau-1]}} \cdot 
        \prob{\MM \inparanth{ \restr{f}{[\tau]}, \restr{g}{[\tau-1]} } = l_\tau} \cdot
        \prob{\CC_{l_\tau} \inparanth{ \restr{f}{J_{l_t}} } = m_\tau}
        \\
        &\leq \prob{\AA \inparanth{\restr{\tilde f}{[\tau-1]}} = \restr{g}{[\tau-1]}} \cdot 
        e^{\e_M} \prob{\MM \inparanth{\restr{\tilde f}{[\tau]}, \restr{g}{[\tau-1]}} = l_\tau}
        \cdot e^{\e_C} \prob{\CC_{l_\tau} \inparanth{ \restr{\tilde f}{J_{l_t}} } = m_\tau}
        \\
        &= e^{(\e_M+\e_C)} \cdot \prob{\AA\inparanth{\restr{\tilde f}{[\tau]}} = g(\tau)}.
    \end{align*}
    At any time $t>\tau$, since the input data streams are identical, we have,
    \begin{align*}
        \prob{\AA \inparanth{\restr{f}{[t]} } = \restr{g}{[t]} }
        &\leq e^{(\e_M+\e_C)} \cdot \prob{\AA \inparanth{ \restr{\tilde f}{[t]} } = \restr{g}{[t]} }.
    \end{align*}
    Hence, algorithm $\AA$ is $(\e_M+\e_C)$-DP.    
\end{proof}

\section{Other experiment details}
All the experiments were performed on a personal device with 8 cores of 3.60GHz 11th Generation Intel Core i7 processor and 32 GB RAM. Given the compute restriction, for any of the datasets mentioned below, we limit the total number of data points across the entire time horizon to the (fairly large) order of $10^5$. Moreover, the maximum time horizon for a stream we have is $300$ for Gowalla Dataset. We consider it to be sufficient to show the applicability of our algorithm for various use cases.

We implement all algorithms in Python and in order to use the natural geometry of the data domain we rely on the spatial computing libraries GeoPandas \cite{geopandas_kelsey_jordahl_2020_3946761}, Shapely \cite{Gillies_Shapely_2023}, NetworkX \cite{networkx_SciPyProceedings_11}, and OSMnx \cite{osmnx_Boeing2017}. These libraries allow us to efficiently perform operations such as loading the geometry of a particular known region, filtering points in a geometry, and sampling/interpolating points in a geometry.


\section{Datasets and Geometry} \label{app: dataset and geometry}

\subsection{Datasets}

 \begin{figure*}[ht]
    \vskip 0.1in
     \centering
     \begin{subfigure}[t]{\textwidth}
         \centering
         \includegraphics[width=\textwidth]{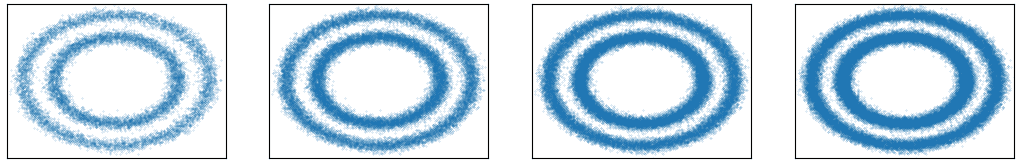}
         \caption{Without deletion, where density on both circles grow with time}
         \label{fig:data_circles_a}
     \end{subfigure}
     \hfill
     \begin{subfigure}[t]{\textwidth}
         \centering
         \includegraphics[width=\textwidth]{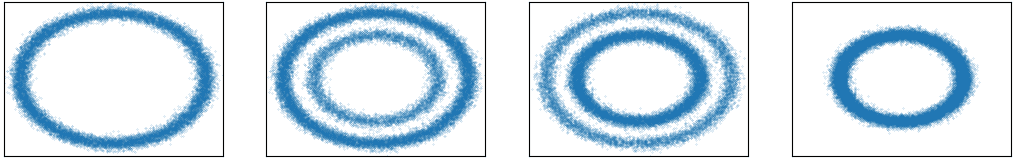}
         \caption{With deletion, where the density gradually changes from predominantly on the outer circle to the inner circle}
         \label{fig:data_circles_b}
     \end{subfigure}
    \caption{Scatter plot for the simulated datasets of two concentric circles showing the resulting location of users at four different times steps progressing from left to right}
    \label{fig:data_circles}
    \vskip -0.1in
\end{figure*}

 \begin{figure*}
    \vskip 0.1in
     \centering
     \begin{subfigure}[t]{0.45\textwidth}
         \centering
         \includegraphics[width=\textwidth]{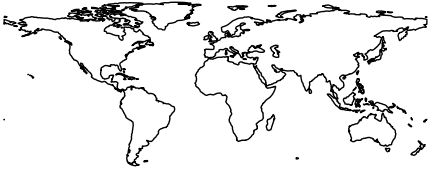}
         \caption{Gowalla}
         \label{fig:geom_gowalla}
     \end{subfigure}
     \hfill
     \begin{subfigure}[t]{0.35\textwidth}
         \centering
         \includegraphics[width=\textwidth]{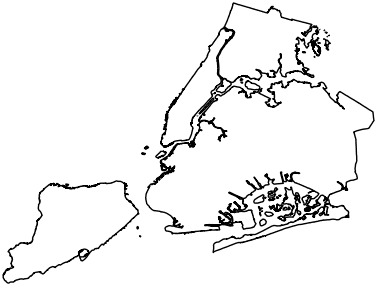}
         \caption{New York taxi (over NY State)}
         \label{fig:geom_ny_state}
     \end{subfigure}
     \hfill
     \begin{subfigure}[t]{0.15\textwidth}
         \centering
         \includegraphics[width=\textwidth]{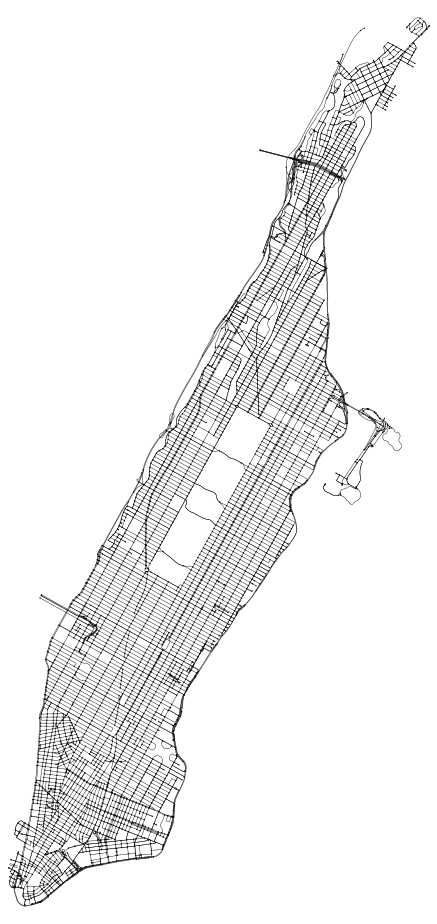}
         \caption{New York Taxi (over road network of Manhattan)}
         \label{fig:geom_ny_mht}
     \end{subfigure}
    \caption{Geometry used for datasets}
    \label{fig:geometries}
    \vskip -0.1in
\end{figure*}

\textbf{Gowalla:} The Gowalla dataset \cite{gowalla} contains location check-ins by a user along with their user id and time on the social media app Gowalla. For this dataset, we use the natural land geometry of Earth except for the continent Antarctica (see Figure~\ref{fig:geom_gowalla} in Appendix~\ref{app: dataset and geometry}). The size of the dataset after restriction to the geometry is about $6.2$ million. We limit the total number of data points to about $200$ thousand and follow a daily release frequency based on the available timestamps.

\textbf{Gowalla with deletion:} To evaluate our method on a dataset with deletion that represents a real-world scenario, we create a version of the Gowalla dataset with deletion. Motivated by the problem of releasing the location of active coronavirus infection cases, we consider each check-in as a report of infection and remove this check-in after $30$ days indicating that the reported individual has recovered.

\textbf{NY taxi:} The NY Taxi dataset \cite{ny_data} contains the pickup and drop-off locations and times of Yellow Taxis in New York. For this dataset, we only use data from January $2013$ which already has more than $14$ million data points. We keep the release frequency to $6$ hours and restrict total data points to about $10^5$ sampling proportional to the number of data points at each time within the overall time horizon. We consider two different underlying geometries for this dataset. The first is the natural land geometry of the NY State as a collection of multiple polygons. The second is the road network of Manhattan Island as a collection of multiple curves. We include a figure of these geometries in Figure~\ref{fig:geom_ny_state} and Figure~\ref{fig:geom_ny_mht} respectively of Appendix~\ref{app: dataset and geometry}. The second geometry is motivated by the fact that the majority of taxi pickup locations are on Manhattan Island and also on its road network.

\textbf{Concentric circles:} We also conducted experiments on an artificial dataset of points located on two concentric circles. As shown in Figure~\ref{fig:data_circles}, we have two different scenarios of this dataset: (a) where both circles grow over time and (b) where the points first appear in one circle and then gradually move to the other circle. Scenario (b) helps us analyze the performance of our algorithm on a dataset where (unlike Gowalla and NY Taxi datasets) the underlying distribution changes dramatically. We also wanted to explore the performance of our algorithm based on the number of data points it observes in initialization and then at each batch. Thus for this dataset, we first fix a constant batch size for the experiment and then index time accordingly. Additionally, we keep initialization time $t_0$ as a value in $[0,1]$ denoting the proportion of total data the algorithm uses for initialization. We explore various parameters on these artificial datasets such as batch size, initialization data ratio, and sensitivity.

\subsection{Initialization Time}

 \begin{figure*}[h]
    \vskip 0.1in
     \centering
     \begin{subfigure}[t]{0.3\textwidth}
        \centering
         \includegraphics[width=\textwidth]{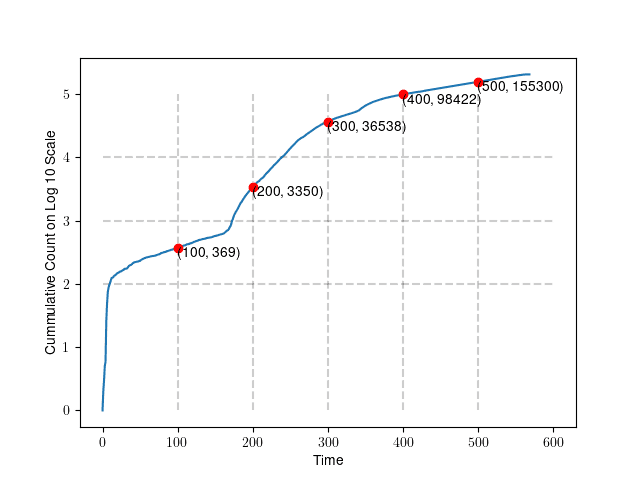}
         \caption{Gowalla}
     \end{subfigure}
     \begin{subfigure}[t]{0.3\textwidth}
        \centering
         \includegraphics[width=\textwidth]{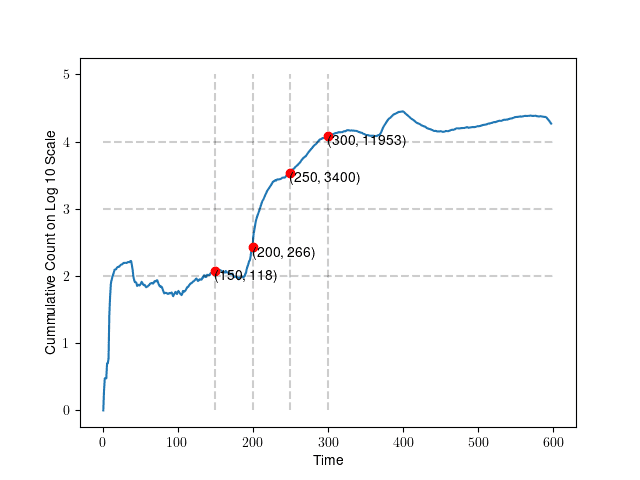}
         \caption{Gowalla with deletion}
     \end{subfigure}
     \begin{subfigure}[t]{0.3\textwidth}  
        \centering
         \includegraphics[width=\textwidth]{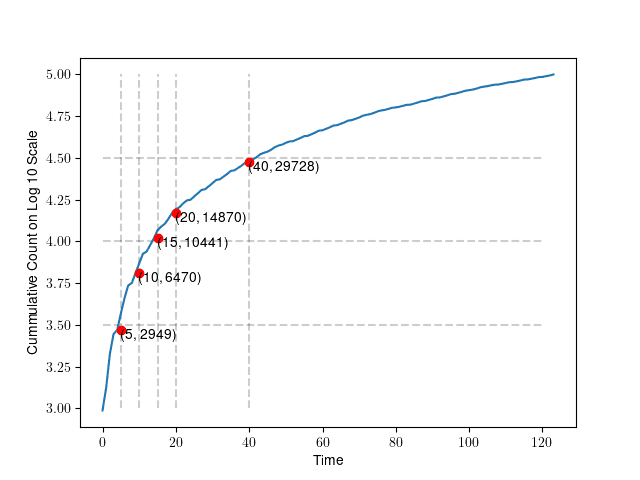}
         \caption{NY Taxi}
     \end{subfigure}
    \caption{Cummulative count as time progresses for Gowalla and NY Taxi Dataset. The plot illustrates our motivation for the choice of initialization time.}
    \label{fig:initialization_time_choice}
    \vskip -0.1in
\end{figure*}

In Figure~\ref{fig:initialization_time_choice}, we show how the total number of data points change with time for the Gowalla and NY Taxi Datasets. For our algorithm to work best, we recommend having points in the order of at least $100$ at each time of the differential stream. Hence, based on the cumulative count observed in Figure~\ref{fig:initialization_time_choice} we select minimum $t_0=100$ for Gowalla, $t_0=150$ for Gowalla with deletion, and $t_0=0$ for NY Taxi datasets.


\section{Synthetic data scatter plots}\label{app: synthetic scatter plots}
In this section, we present a visualization of the synthetic data generated by our algorithm PHDStream with the Simple counter. We present scatter plots comparing the private input data with the generated synthetic data in Figures~\ref{fig:appendix_gowalla_scatter_plots}, \ref{fig:appendix_ny_scatter_plots}, \ref{fig:appendix_ny_mht_scatter_plots}, and \ref{fig:appendix_motion_circle_scatter_plots} for the datasets Gowalla, NY Taxi, and Concentric Circles with deletion respectively. Due to space limitations, we only show the datasets a few times, with time increasing from left to right. In each of the figures mentioned above, the first row corresponds to the private input data and is labeled "True". In each subplot, we plot the coordinates present at that particular time after accounting for all additions and deletions so far. Each of the following rows corresponds to one run of the PHDStream algorithm with the Simple counter and for a particular privacy budget $\e$ as labeled on the row. In each subplot, we use black dashed lines to create a partition of the domain corresponding to the leaves of the current subtree as generated by $\PrivTree_T$. For comparison, we overlay the partition created by the algorithm with $\e=0.5$ on the true data plot as well.

 \begin{figure}[h]
     \centering
     \includegraphics[width=\textwidth,keepaspectratio]{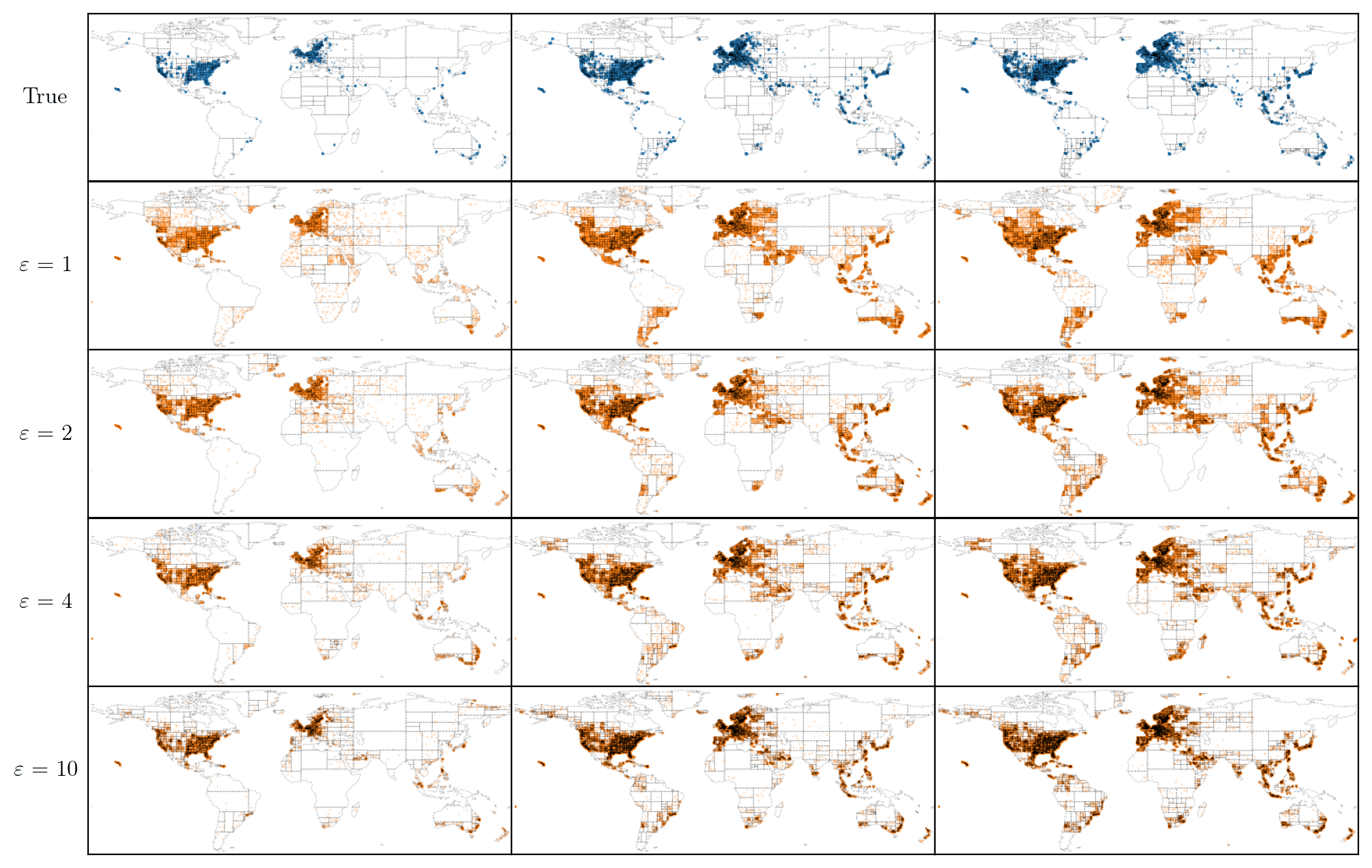}
    \caption{Scatter plot of True and Synthetic data (generated by PHDStream) over the Gowalla dataset for initialization time index $t_0=100$}
    \label{fig:appendix_gowalla_scatter_plots}
\end{figure}
 \begin{figure}[b]
     \centering
     \includegraphics[width=\textwidth]{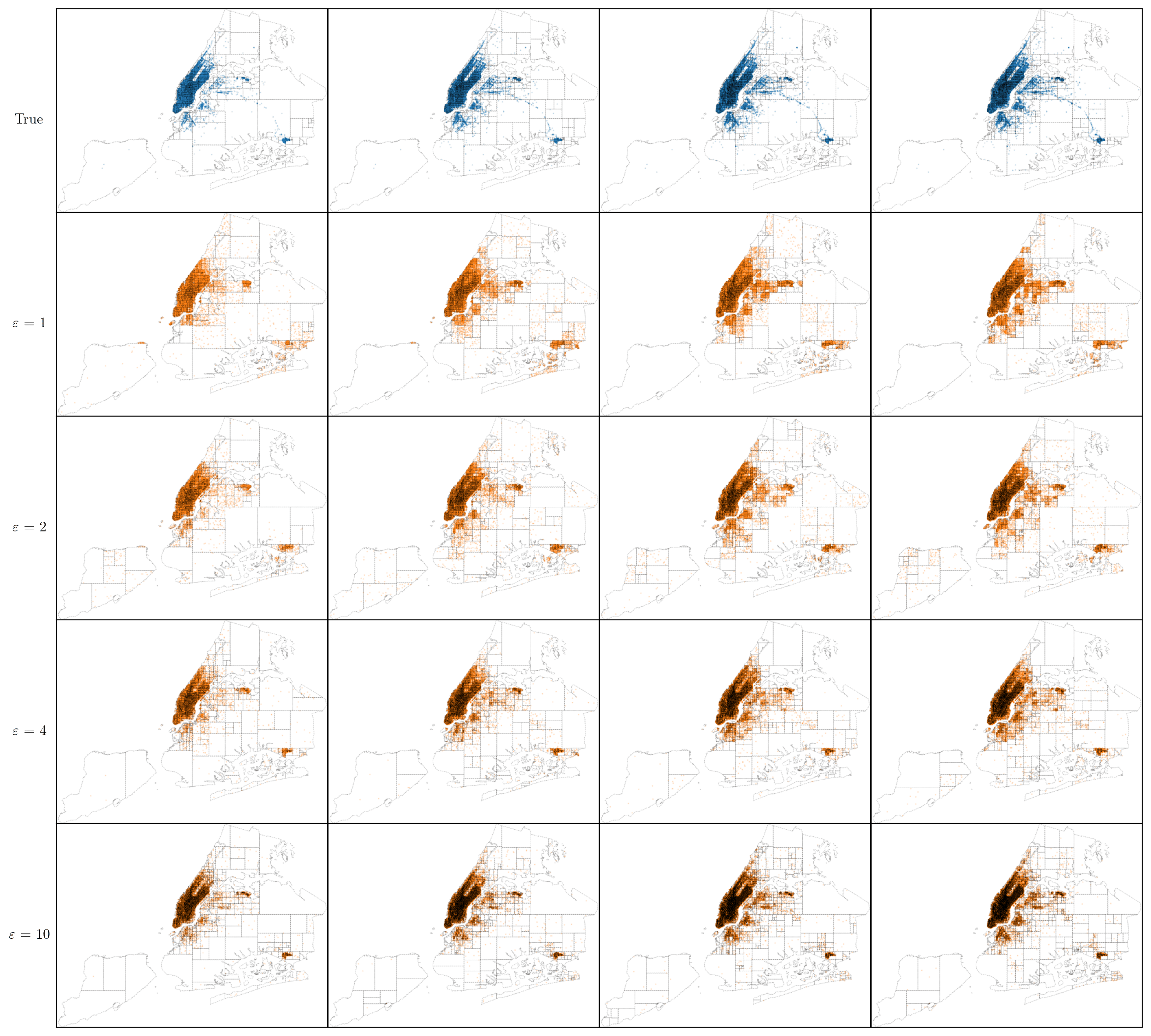}
    \caption{Scatter plot of True and Synthetic data (generated by PHDStream) over the New York dataset over NY State for initialization time index $t_0=2$}
    \label{fig:appendix_ny_scatter_plots}
\end{figure}
 \begin{figure}[b]
     \centering
     \includegraphics[width=\textwidth]{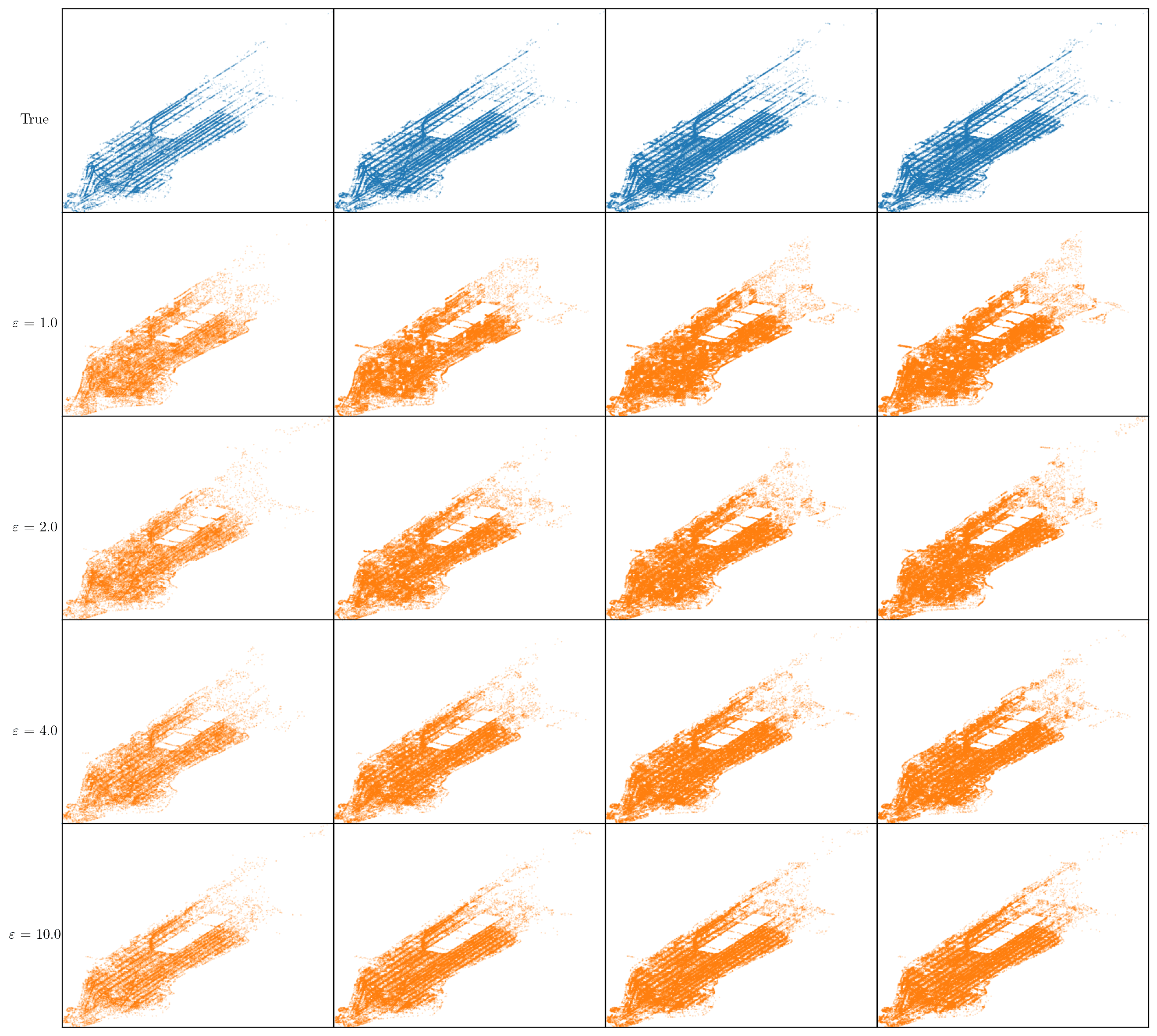}
    \caption{Scatter plot of True and Synthetic data (generated by PHDStream) over the New York dataset over road network of Manhattan for initialization time index $t_0=2$}
    \label{fig:appendix_ny_mht_scatter_plots}
\end{figure}
\begin{figure}
     \centering
    \includegraphics[width=\textwidth,keepaspectratio]{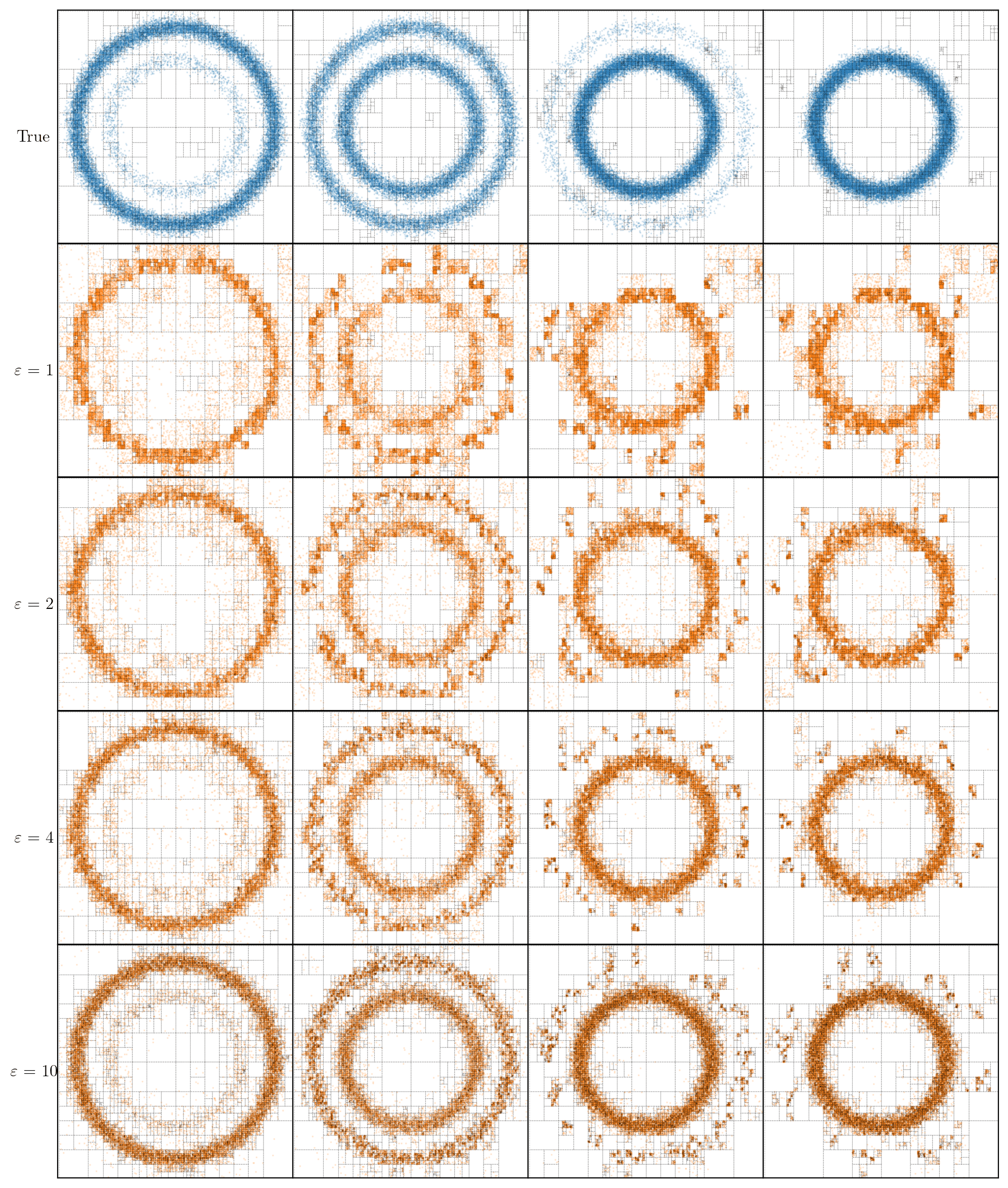}
    \caption{Scatter plot of True and Synthetic data (generated by PHDStream) over the Concentric circles with deletion dataset for initialization data ratio $t_0=0.1$, constant batch size $500$}
    \label{fig:appendix_motion_circle_scatter_plots}
\end{figure}


\section{More Results}\label{app: more results}

Figures~\ref{fig:appendix_gowalla_small}, \ref{fig:appendix_gowalla_with_deletion_small}, \ref{fig:appendix_ny_small}, \ref{fig:appendix_ny_mht_small}, \ref{fig:appendix_no_deletion_circle_small}, and \ref{fig:appendix_motion_circle_small} show a comparison of performances on different datasets for small-range queries. Each subplot has a time horizon on the x-axis and corresponds to a particular value of privacy budget $\e$ (increasing from left to right) and initialization time $t_0$ (increasing from top to bottom). We do not show PHDStream with Block counter and Baseline~2 in these figures as they both have very large errors in some cases and that distorts the scale of the figures. We also show some results on medium and large scale range queries (as described in Section~\ref{s: metric}) for the dataset Gowalla in Figures~\ref{fig:appendix_gowalla_medium}, ~\ref{fig:appendix_gowalla_large} and for the dataset Gowalla with deletion in Figures~\ref{fig:appendix_gowalla_with_del_medium}, ~\ref{fig:appendix_gowalla_with_del_large} respectively.

 \begin{figure}
     \centering         \includegraphics[height=0.45\textheight,keepaspectratio]{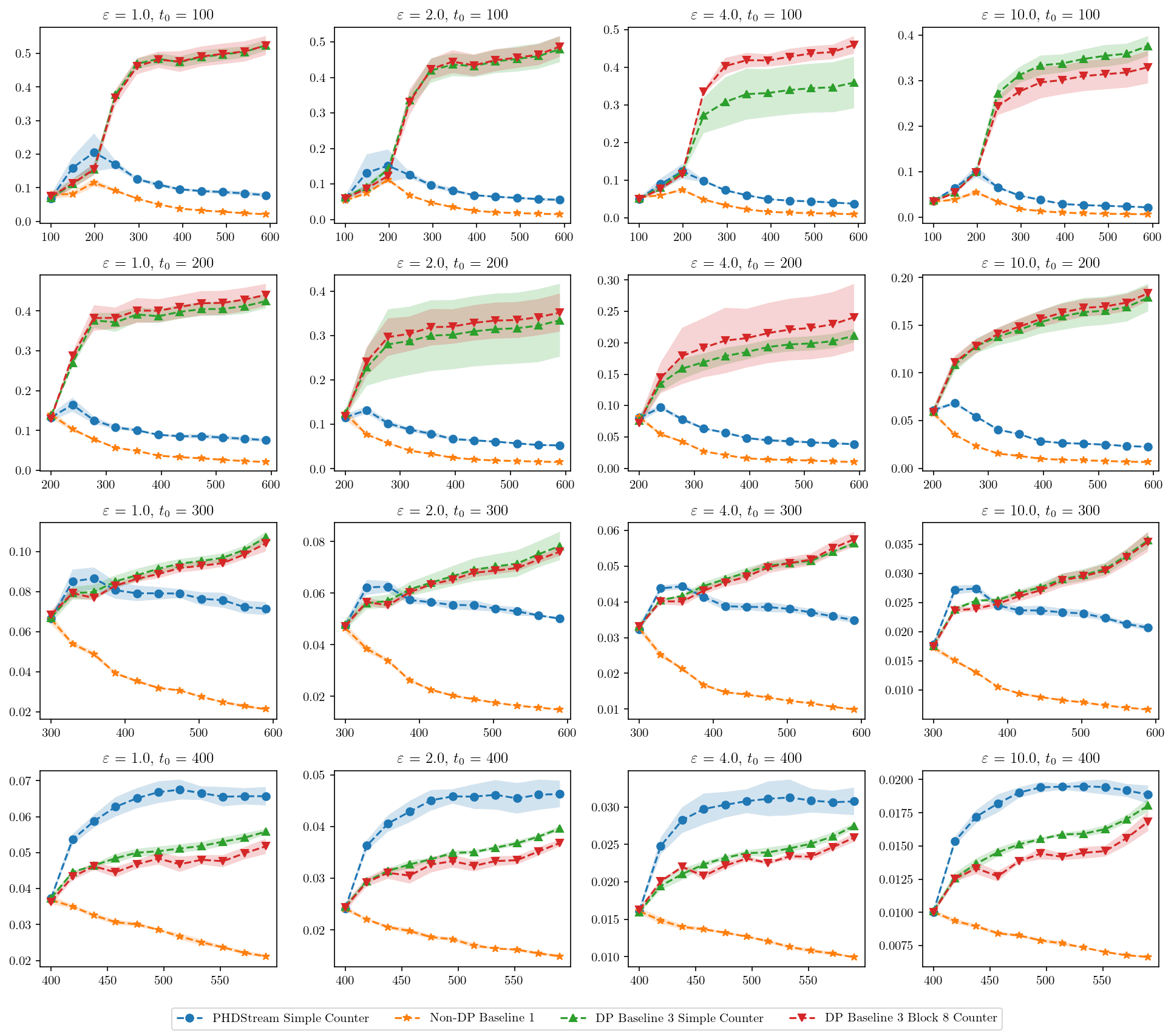}
    \caption{Query error for PHDStream over the dataset Gowalla and small range queries.}
    \label{fig:appendix_gowalla_small}
\end{figure}
 \begin{figure}
     \centering         \includegraphics[height=0.45\textheight,keepaspectratio]{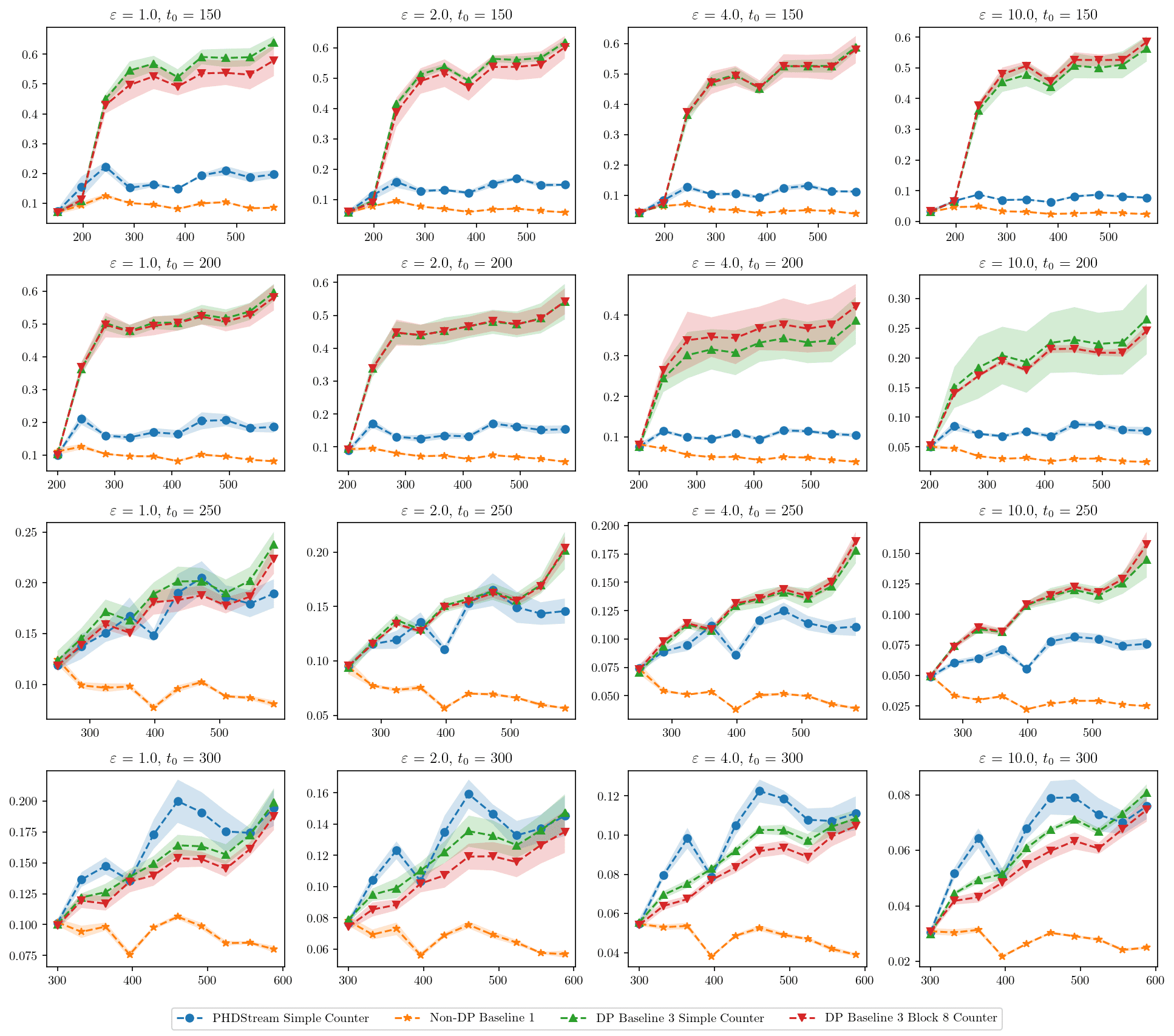}
    \caption{Query error for PHDStream over the dataset Gowalla with deletion and small range queries.}
    \label{fig:appendix_gowalla_with_deletion_small}
\end{figure}

 \begin{figure}
     \centering         \includegraphics[height=0.45\textheight,keepaspectratio]{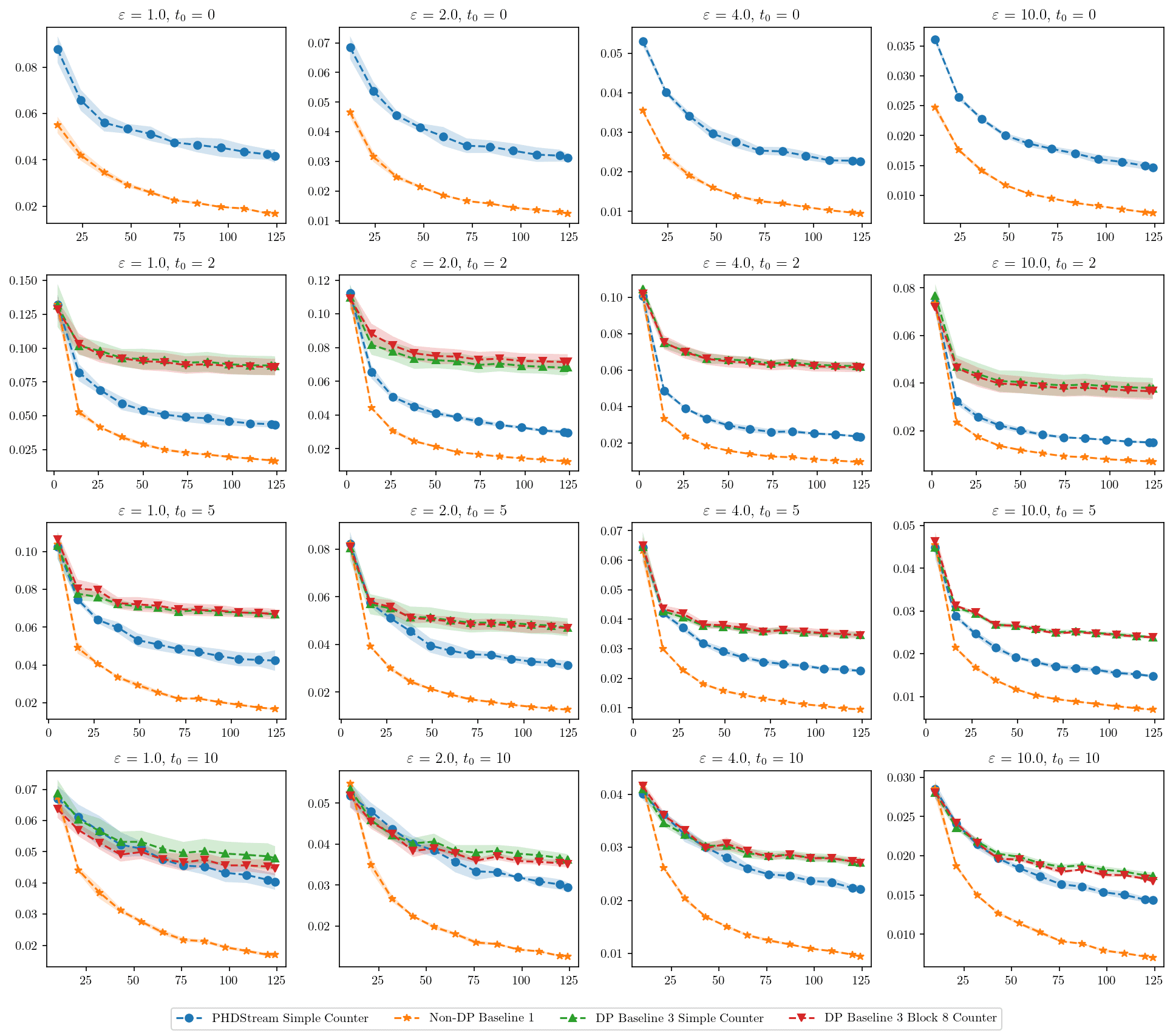}
    \caption{Query error for PHDStream over the dataset New York (over NY state) and small range queries.}
    \label{fig:appendix_ny_small}
\end{figure}

 \begin{figure}
     \centering         \includegraphics[height=0.45\textheight,keepaspectratio]{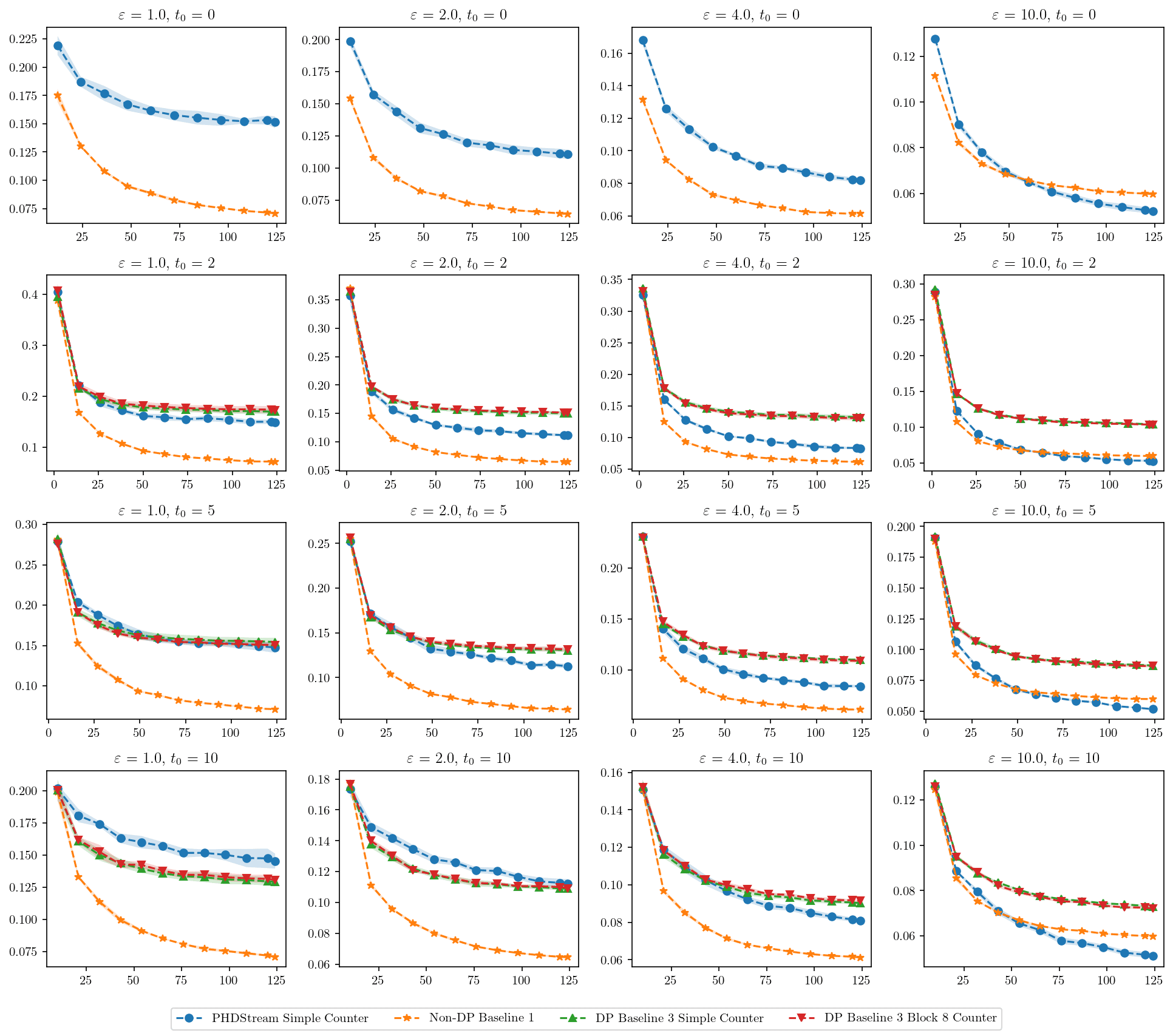}
    \caption{Query error for PHDStream over the dataset New York (over Manhattan road network) and small range queries.}
    \label{fig:appendix_ny_mht_small}
\end{figure}

\begin{figure}
    \centering
    \includegraphics[height=0.45\textheight,keepaspectratio]{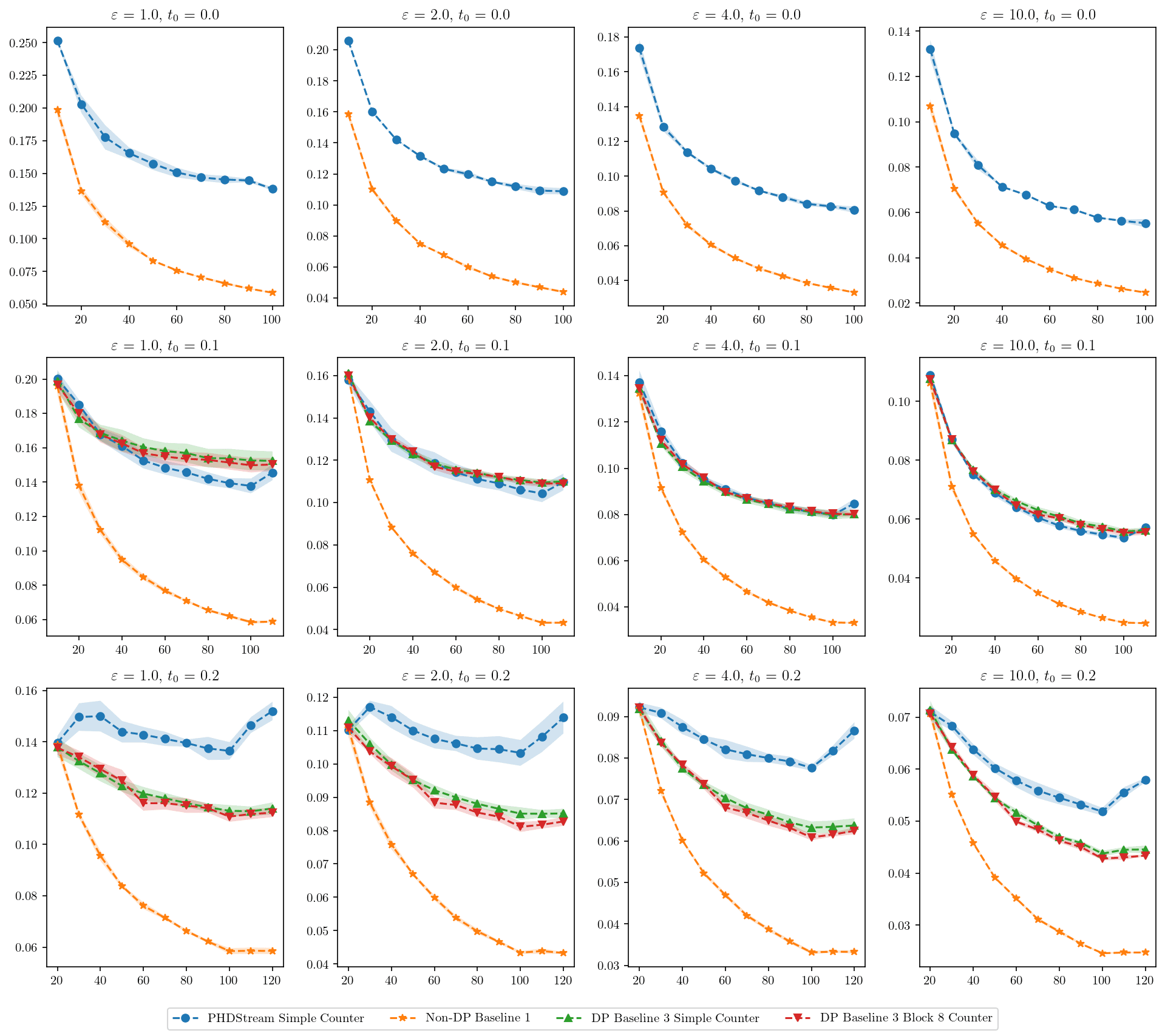}
    \caption{Query error for PHDStream over the dataset Concentric circles no deletion and small range queries.}
    \label{fig:appendix_no_deletion_circle_small}
\end{figure}
\begin{figure}
     \centering
     \includegraphics[height=0.45\textheight,keepaspectratio]{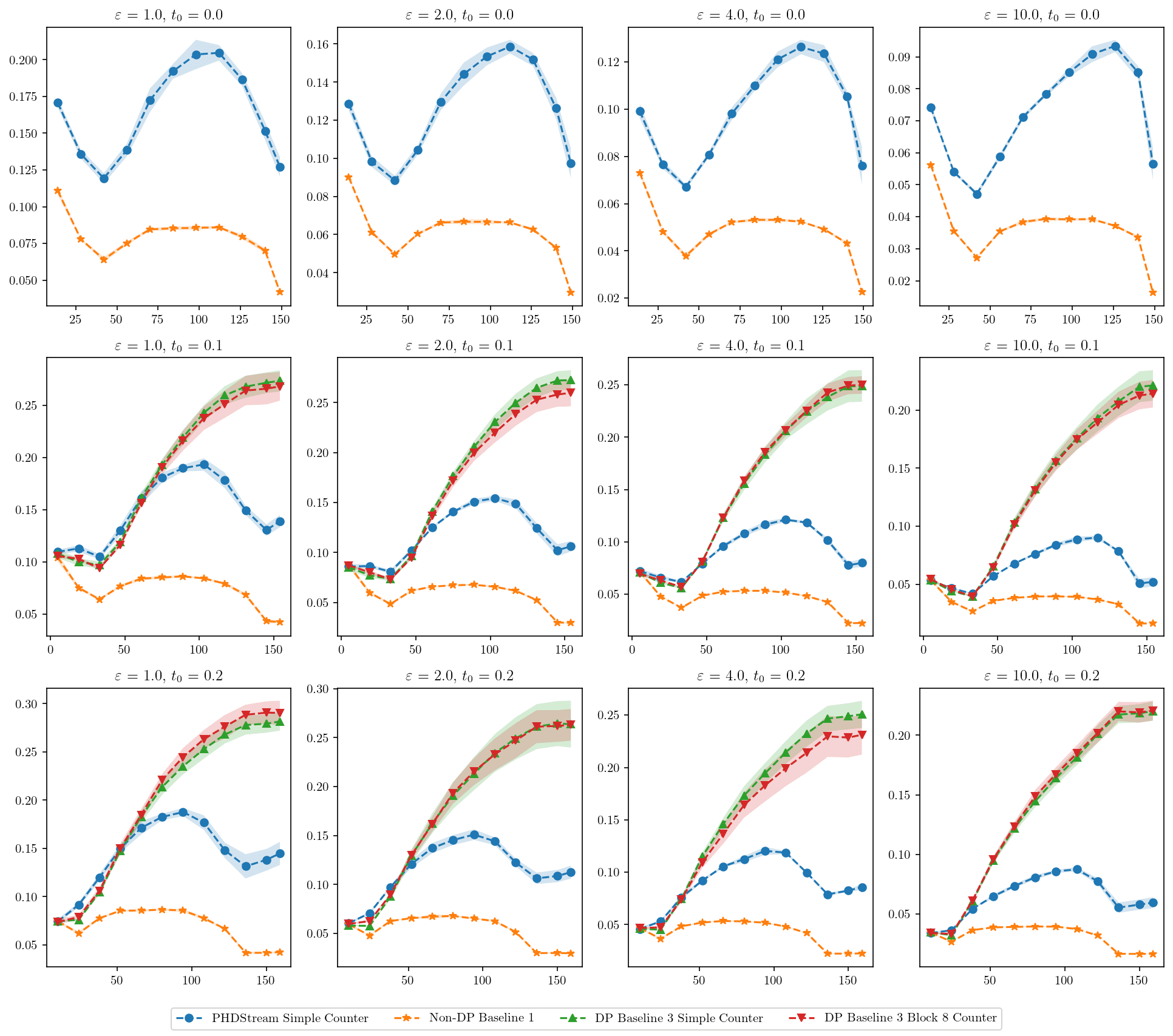}
    \caption{Query error for PHDStream over the dataset Concentric circles with deletion and small range queries.}
    \label{fig:appendix_motion_circle_small}
\end{figure}

 \begin{figure}
     \centering         \includegraphics[height=0.45\textheight,keepaspectratio]{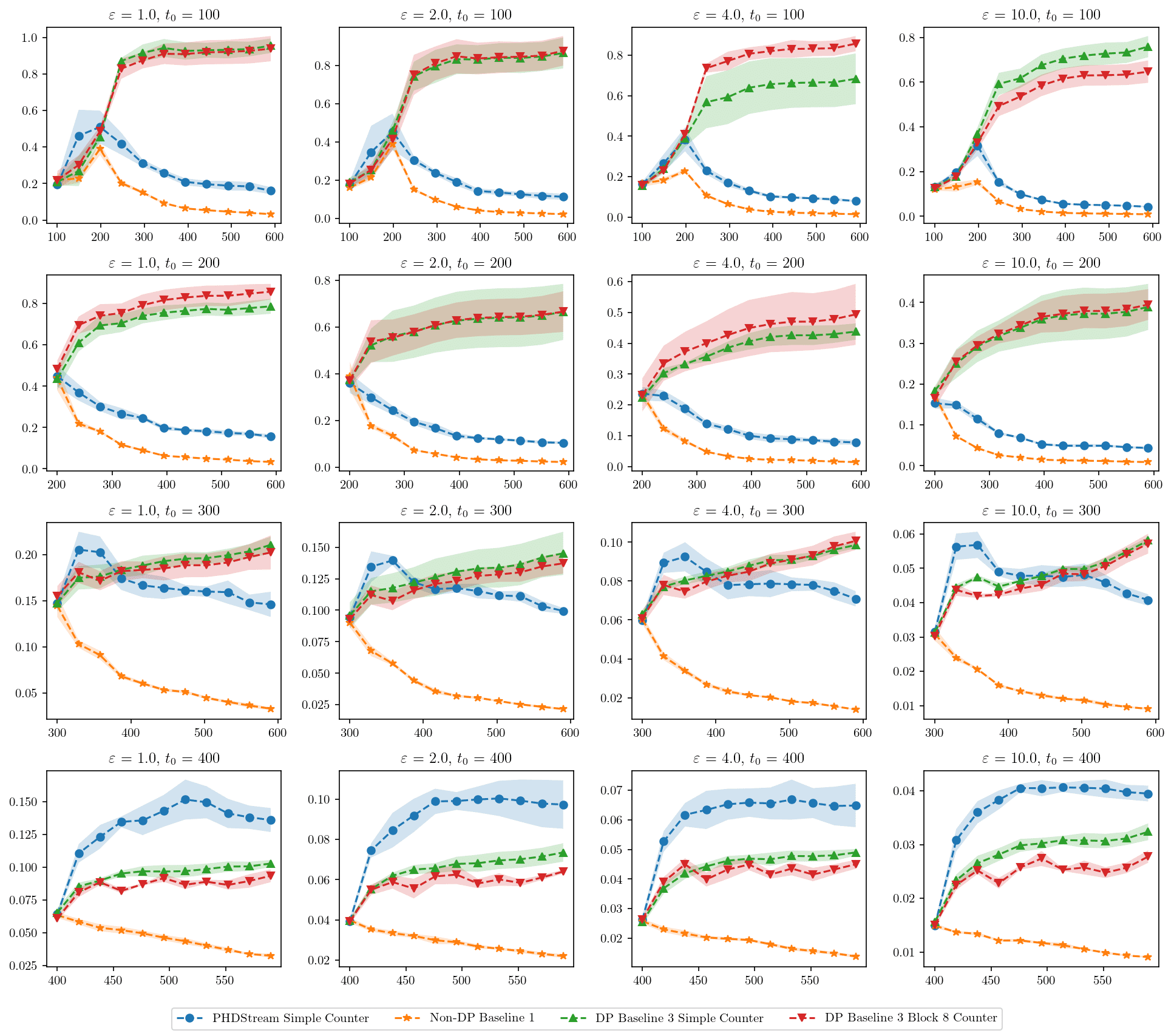}
    \caption{Query error for PHDStream over the dataset Gowalla and medium range queries.}
    \label{fig:appendix_gowalla_medium}
\end{figure}
 \begin{figure}
     \centering         \includegraphics[height=0.45\textheight,keepaspectratio]{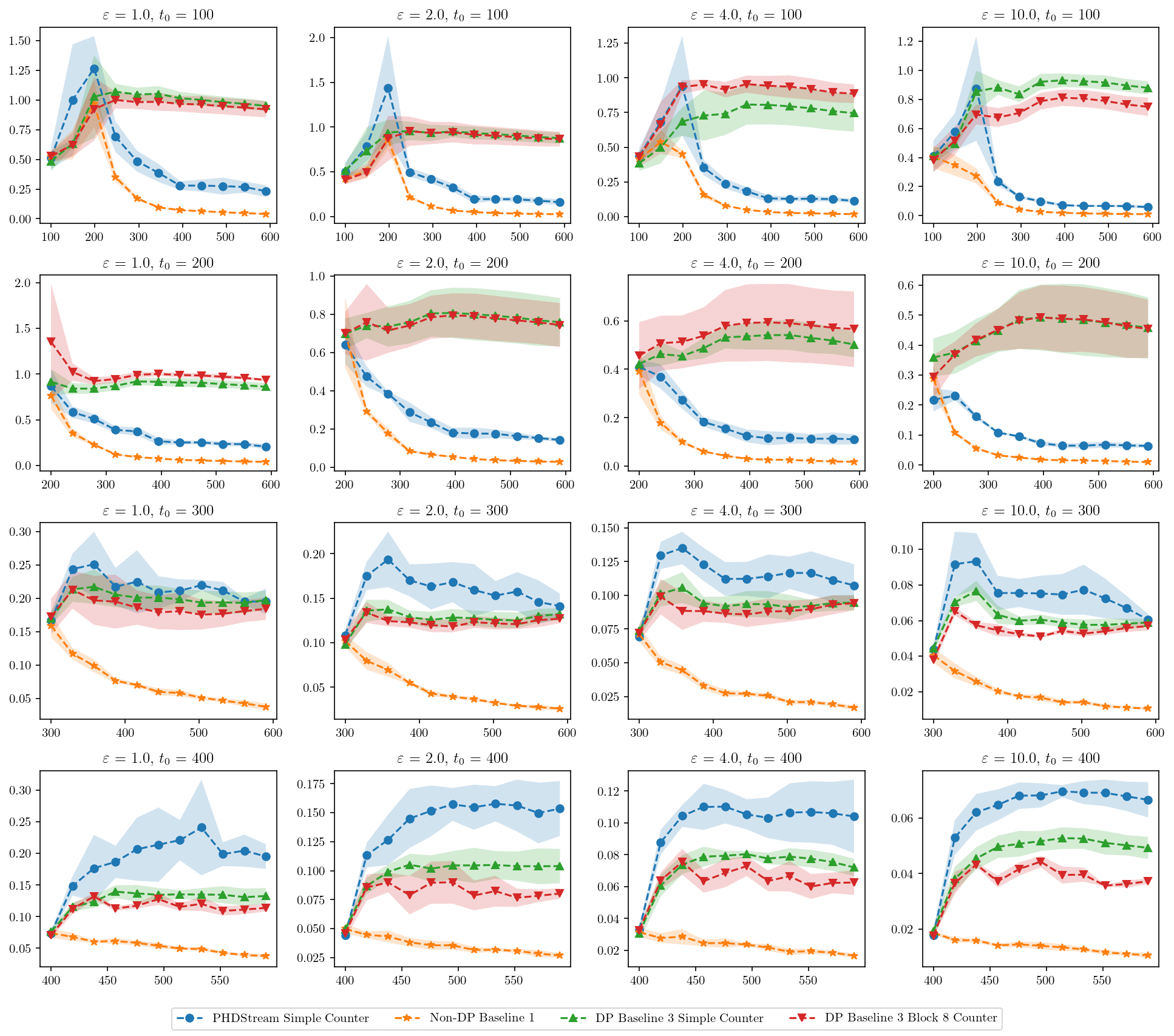}
    \caption{Query error for PHDStream over the dataset Gowalla and large range queries.}
    \label{fig:appendix_gowalla_large}
\end{figure}

 \begin{figure}
     \centering         \includegraphics[height=0.45\textheight,keepaspectratio]{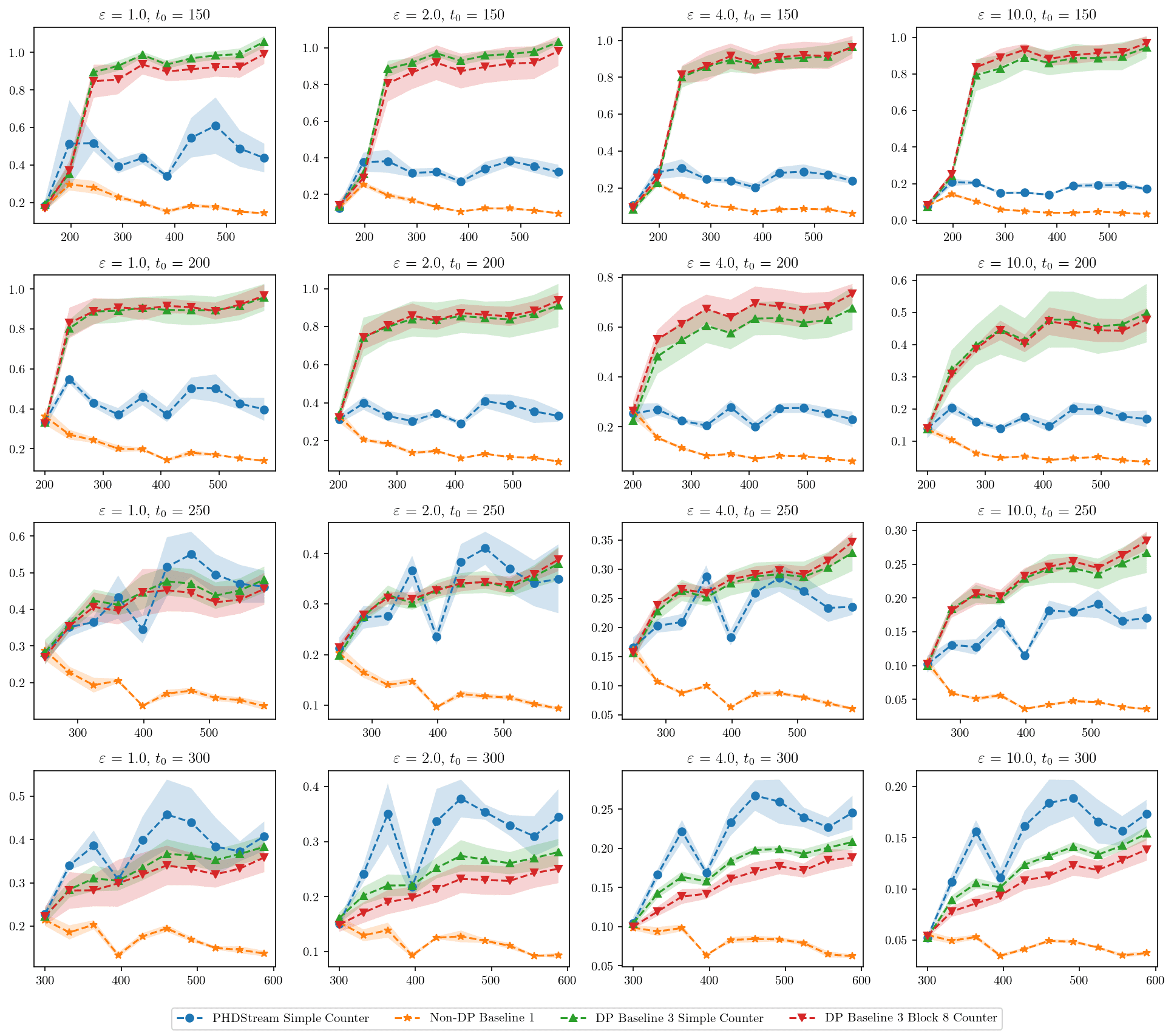}
    \caption{Query error for PHDStream over the dataset Gowalla with deletion and medium range queries.}
    \label{fig:appendix_gowalla_with_del_medium}
\end{figure}
 \begin{figure}
     \centering         \includegraphics[height=0.45\textheight,keepaspectratio]{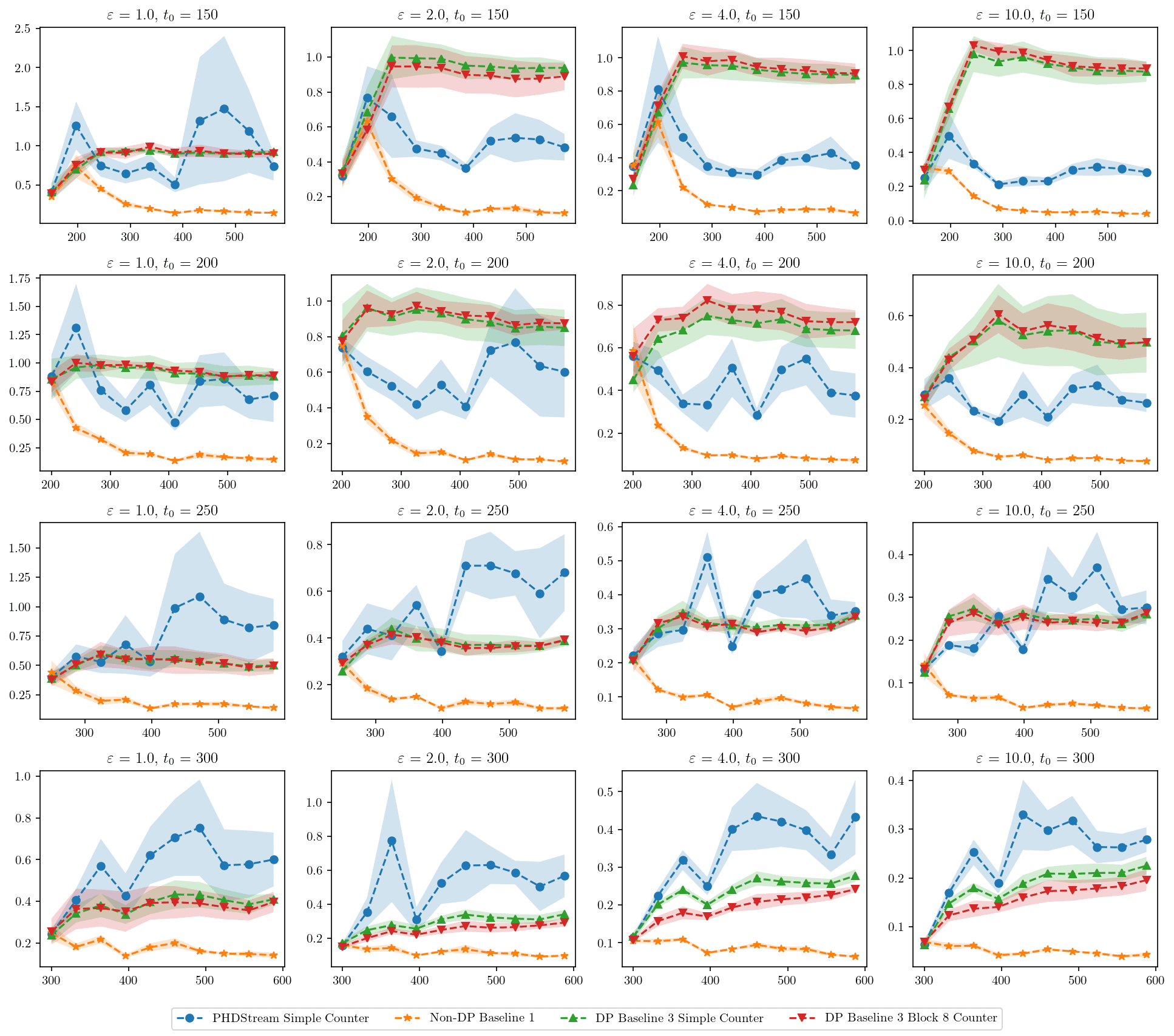}
    \caption{Query error for PHDStream over the dataset Gowalla with deletion and large range queries.}
    \label{fig:appendix_gowalla_with_del_large}
\end{figure}


\end{document}